\newcommand{\D}{\mathrm{d}}
\newcommand*{\Z}{\mathbb{Z}}
\newcommand*{\id}{\mathbb{I}}
\newcommand{\eps}{\varepsilon}
\newcommand{\pcbond}{p_c^{\text{bond}}}
\newcommand{\pubond}{p_u^{\text{bond}}}
\newcommand{\pcsite}{p_c^{\text{site}}}
\newcommand{\pusite}{p_u^{\text{site}}}
\newcommand{\pmatch}{p_c^{\text{match}}}
\DeclareMathOperator{\pred}{predecessor}
\DeclareMathOperator{\suc}{successor}
\DeclareMathOperator{\parent}{parent}
\DeclareMathOperator{\child}{child}
\DeclareMathOperator{\outdegree}{outdegree}
\DeclareMathOperator{\vertextype}{type}
\theoremstyle{plain}
\newtheorem{theorem}{Theorem}
\begin{document}

\title{Percolation Thresholds in Hyperbolic Lattices}

\author{Stephan Mertens}
\email{mertens@ovgu.de}
\affiliation{Santa Fe Institute, 1399 Hyde Park Rd., Santa Fe, NM 87501, USA}
\affiliation{Institut f\"ur Theoretische Physik, Universit\"at
  Magdeburg, Universit\"atsplatz~2, 39016~Magdeburg, Germany}

\author{Cristopher Moore}
\email{moore@santafe.edu}
\affiliation{Santa Fe Institute, 1399 Hyde Park Rd., Santa Fe, NM 87501, USA}

\date{\today}

\begin{abstract}
  We use invasion percolation to compute numerical values for bond and
  site percolation thresholds $p_c$ (existence of an infinite cluster) and $p_u$ (uniqueness of the infinite cluster) of
  tesselations $\{P,Q\}$ of the hyperbolic plane, where $Q$ faces
  meet at each vertex and each face is a $P$-gon. Our values are accurate to six or seven
  decimal places, allowing us to explore their functional dependency on $P$ and $Q$ and
  to numerically compute critical exponents.  We also prove rigorous
  upper and lower bounds for $p_c$ and $p_u$ that can be used to find
  the scaling of both thresholds as a function of $P$ and $Q$.
\end{abstract}

\pacs{64.60.ah,02.70.-c, 02.70.Rr, 05.10.Ln}

\maketitle

\section{Introduction}
\label{sec:intro}

There is plenty of room in the hyperbolic plane: enough 
to host more parallel lines than those claimed by Euclid's fifth postulate,
and enough to allow an infinite number of tesselations by regular
polygons. Fig.~\ref{fig:examples} shows examples of such tesselations,
drawn in the Poincar\'e disk representation of the hyperbolic plane \footnote{The figures were
  generated by a Java applet from D. Hatch,
  \url{www.plunk.org/~hatch}}.

We consider tilings of a surface where $Q$ regular $P$-gons meet at each vertex, 
and we denote such a tiling by the Schl\"afli symbol
$\{P,Q\}$. The surface is flat
if and only if $(P-2)(Q-2)=4$, which has only three solutions:
$\{3,6\}$ (the triangular lattice), $\{4,4\}$ (the square lattice) and
$\{6,3\}$ (the honeycomb lattice).  When $(P-2)(Q-2) > 4$, 
the surface has negative Gaussian curvature, and in that case we refer to
$\{P,Q\}$ as a hyperbolic lattice.

\begin{figure}[hb]
  \centering
  \includegraphics[width=0.3\columnwidth]{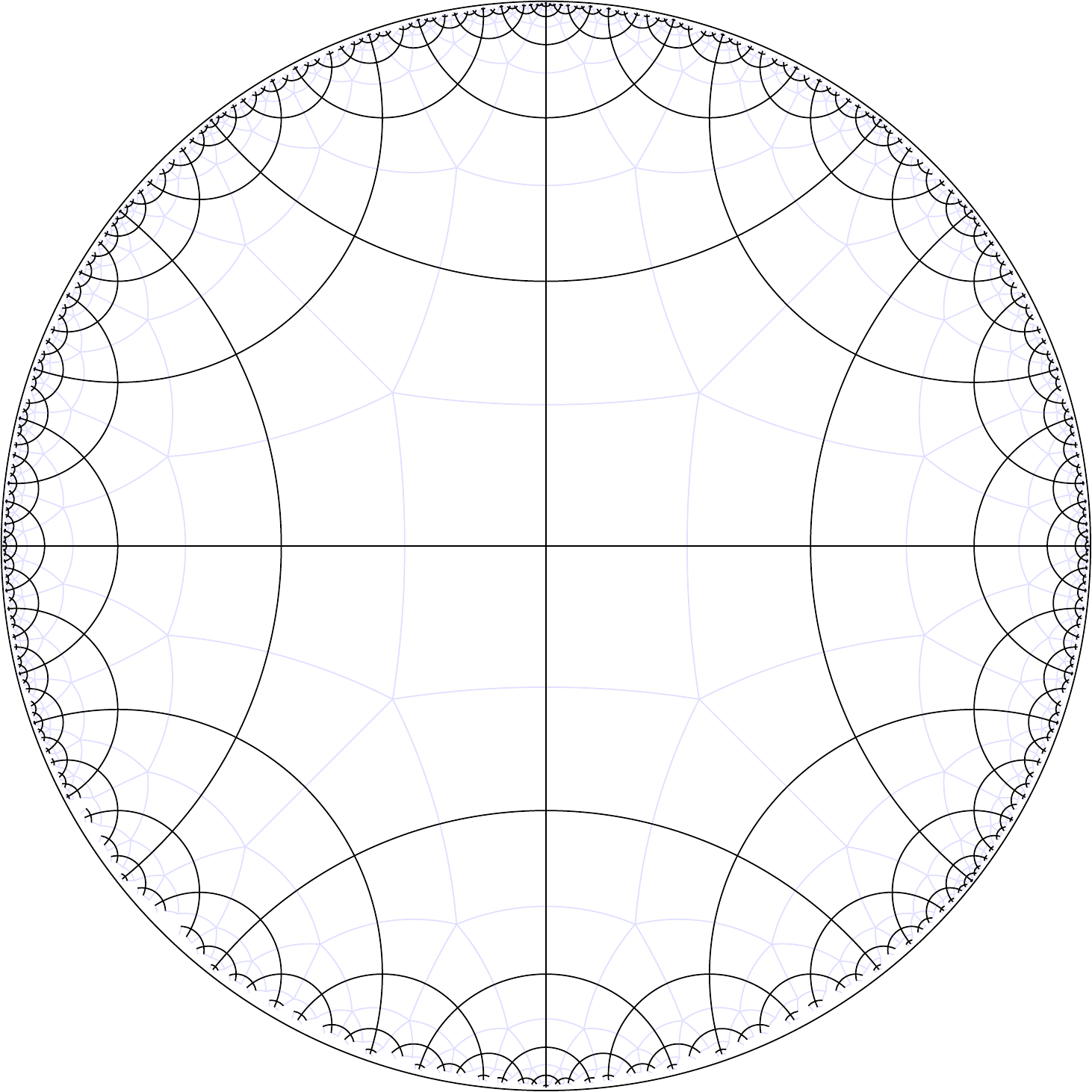}\hfill
  \includegraphics[width=0.3\columnwidth]{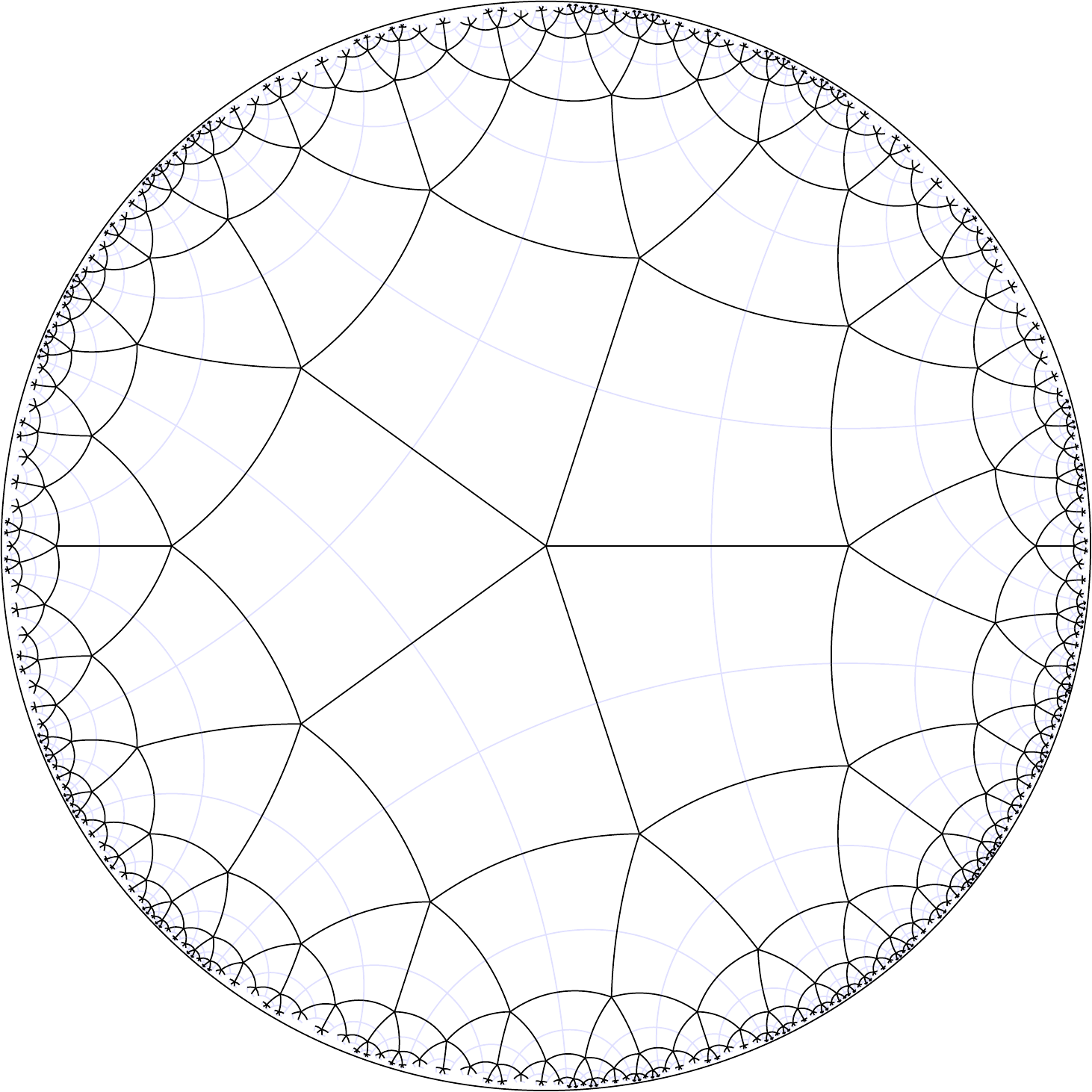}\hfill
  \includegraphics[width=0.3\columnwidth]{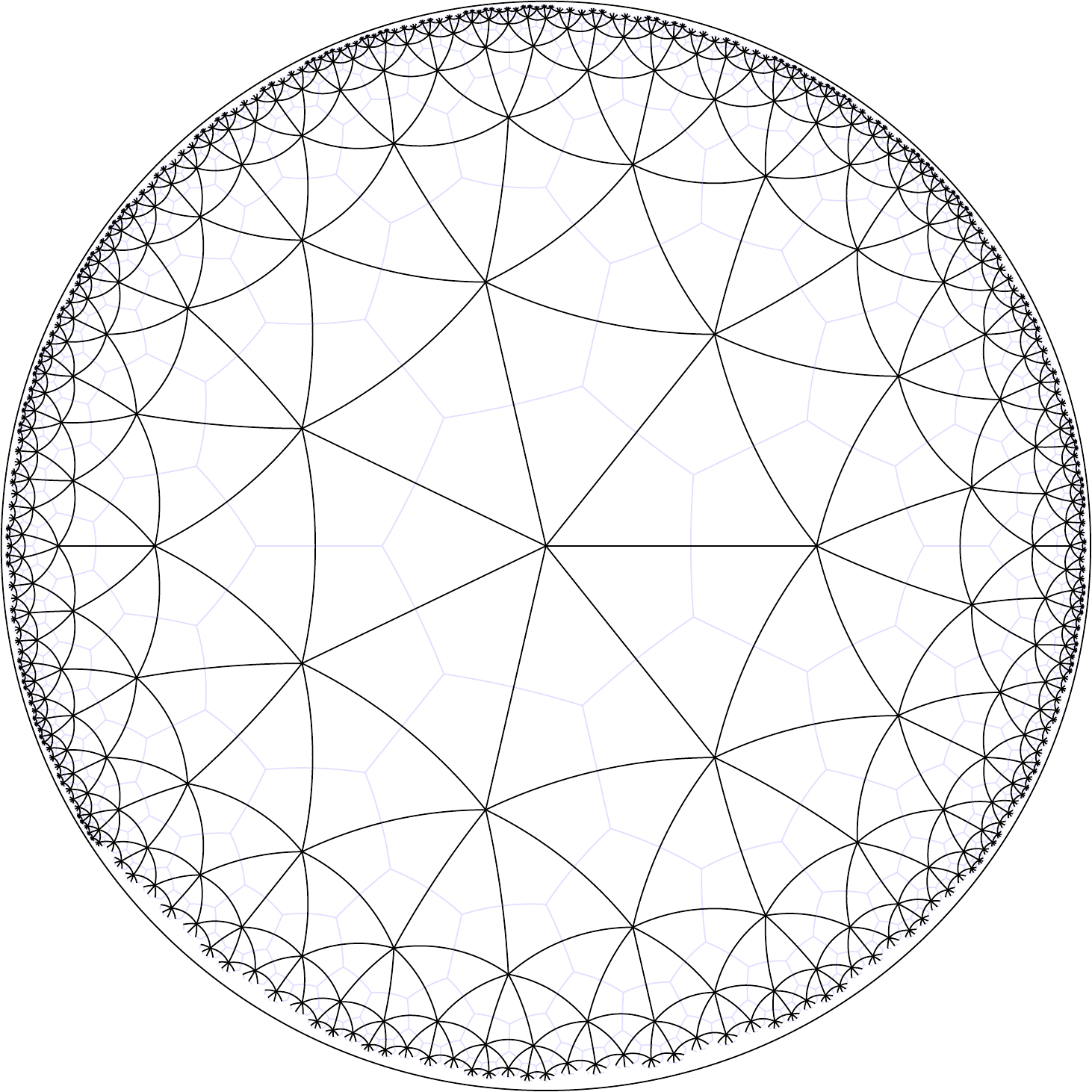}
  \caption{Hyperbolic lattices $\{5,4\}$, $\{4,5\}$ and $\{3,7\}$
    (left to right) }
  \label{fig:examples}
\end{figure}

Hyperbolic lattices, and more generally graphs embedded in hyperbolic space, were first popularized through the art of
M.C. Escher~\cite{schattschneider:04}.  They have also become
popular in physics and computer science, in studies of Brownian 
motion~\cite{monthus:texier:96}, diffusion~\cite{baek:yi:kim:08}, complex
networks~\cite{krioukov:etal:10}, cellular automata~\cite{margenstern}, 
hard disks~\cite{modes:kamien:07}, and the critical
exponents of the Ising model~\cite{shima:sakaniwa:06,shima:sakaniwa:06a}.

In this contribution we will discuss percolation on hyperbolic lattices, and in particular how to compute rigorous bounds on percolation thresholds and highly-accurate numerical values for them.  The latter is a challenge for several reasons.

The first problem is labeling the vertices of a hyperbolic lattice. Unlike in Euclidean lattices such as $\Z^d$, we cannot simply refer to vertices with vectors of integers.  In Appendix~\ref{sec:implementation}, we present a labeling scheme which largely overcomes this difficulty, giving each vertex a unique string over a finite alphabet which makes it easy to generate a list of its neighbors.

The second, and more severe, problem is the exponential growth of the number of vertices as a function of distance from a given vertex (see Appendix~\ref{sec:counting}). This limits the size of lattices that can be stored in a computer, and is probably the main reason why previous numerical measurements of percolation thresholds in hyperbolic lattices are not very accurate: bond percolation thresholds have been calculated to only two decimal places~\cite{baek:minhagen:kim:09,gu:ziff:12}, and site percolation thresholds are simply missing from the literature.  

We avoid the need to store large hyperbolic lattices by using the invasion percolation algorithm~\cite{wilkinson:willemsen:83}, which we review in Section~\ref{sec:invasion}.  Combining this with our labeling scheme allows us to store just the vertices in the percolating cluster, as opposed to the entire lattice.  As a result, we can compute site and bond percolation thresholds to at least six decimal places (see Section~\ref{sec:results}).  This is accurate enough to analyze how these thresholds scale with $P$ and $Q$ and numerically compute critical exponents (Section~\ref{sec:exponents}).  We also prove several rigorous upper and lower bounds on these thresholds, and compare them to our numerical results.

Finally, unlike Euclidean lattices, hyperbolic lattices have two distinct percolation thresholds.  At the first threshold $p_c$, infinite clusters appear, but there are many of them.  At the second threshold $p_u$, they merge to form a single cluster, so that the infinite cluster is unique. \emph{Prima facie} it is not obvious how to compute the uniqueness threshold $p_u$, but we will show that this problem can be mapped to the more familar task of computing $p_c$ on the so-called matching lattice\cite{sykes:essam:64}. For this, we need a bit of theory, which we present in the next section.

\section{The uniqueness threshold and the matching lattice}
\label{sec:theory}

A salient feature of hyperbolic lattices is that they are \emph{nonamenable}. An infinite graph is amenable if the surface-to-volume ratio tends to zero: equivalently, if the volume of a sphere of radius $\ell$, i.e., the number of vertices within $\ell$ steps of a given vertex, grows polynomially rather than exponentially.  Flat lattices with $(P-2)(Q-2)=4$ are
amenable because the volume of a sphere grows as $\ell^2$, and the surface area grows as $\ell$.  In hyperbolic lattices, on the other hand, the volume and surface area of a sphere both grow as $\lambda^\ell$ for the same $\lambda > 1$. Indeed, in the limit $P \to \infty$, the hyperbolic lattice becomes a Bethe lattice or Cayley tree, i.e., an infinite tree where each vertex has $Q-1$ daughters (see e.g.~\cite{ostilli:12,mosseri:sadoc:82}). 

It is known that percolation (site or bond) on planar, nonamenable graphs, including hyperbolic lattices, has two distinct critical densities~\cite{benjamini:schramm:01},
\begin{equation}
  \label{eq:pc-pu}
  0 < p_c < p_u < 1 \, ,
\end{equation}
where $p_c$ is defined as the infimum of $p \in (0,1)$ such that the sites (bonds) selected with probability $p$ form at least one cluster of infinite size, and $p_u$ is defined as the infimum of $p \in (0,1)$ such that the selected sites (bonds) form a \emph{unique} infinite cluster.  Thus for $p < p_c$ there is no infinite cluster, for $p_c < p < p_u$ there are infinitely many infinite clusters, and for $p > p_u$ they have all merged into a single infinite cluster.  In contrast, amenable graphs like $\Z^d$ do not have ''enough room'' to host more than one infinite cluster, so $p_c=p_u$ on these lattices.

There are numerous established numerical methods to measure $p_c$, including union-find algorithms~\cite{newman:ziff:00,mertens:moore:contperc} and exact solutions of small systems~\cite{scullard:jacobsen:12,jacobsen:14,jacobsen:15,mertens:ziff:16}. Detecting the uniqueness of the infinite cluster and thus measuring $p_u$ is \emph{a priori} a different and more daunting task. Fortunately, this problem can be reduced to the problem of computing $p_c$ on a related graph. For planar, nonamenable graphs $G$, it can be proven that 
\begin{equation}
  \label{eq:bond-duality}
  \pubond(G) = 1-\pcbond(G^\dagger) \, ,
\end{equation}
where $G^\dagger$ is the dual of $G$~\cite[Theorem
3.8]{benjamini:schramm:01}. 
The dual of a planar lattice $\{P,Q\}$ is $\{Q,P\}$, so
\begin{equation}
  \label{eq:bond-duality-pq}
  \pubond(\{P,Q\}) = 1 - \pcbond(\{Q,P\}) \, .
\end{equation}
Hence for bond percolation, we only need to solve the familiar problem
of computing $\pcbond$ (on the dual lattice) to get a value for $\pubond$.
But what about site percolation? 

We claim that for planar, nonamenable graphs $G$,
\begin{equation}
  \label{eq:site-duality}
  \pusite(G) = 1 - \pcsite(\hat{G}) \, ,
\end{equation}
where $\hat{G}$ is the \emph{matching lattice} of $G$.  The vertices of $\hat{G}$ are the 
same as those of $G$, but with additional edges so that the vertices around each face
form a clique, a fully connected graph~\cite{sykes:essam:64}. Fig.~\ref{fig:matching-example} shows the matching lattice 
of the $\{4,5\}$ lattice. Note that $G$ is non-amenable if and only
if $\hat{G}$ is non-amenable.

\begin{figure}
  \centering
  \includegraphics[width=\columnwidth]{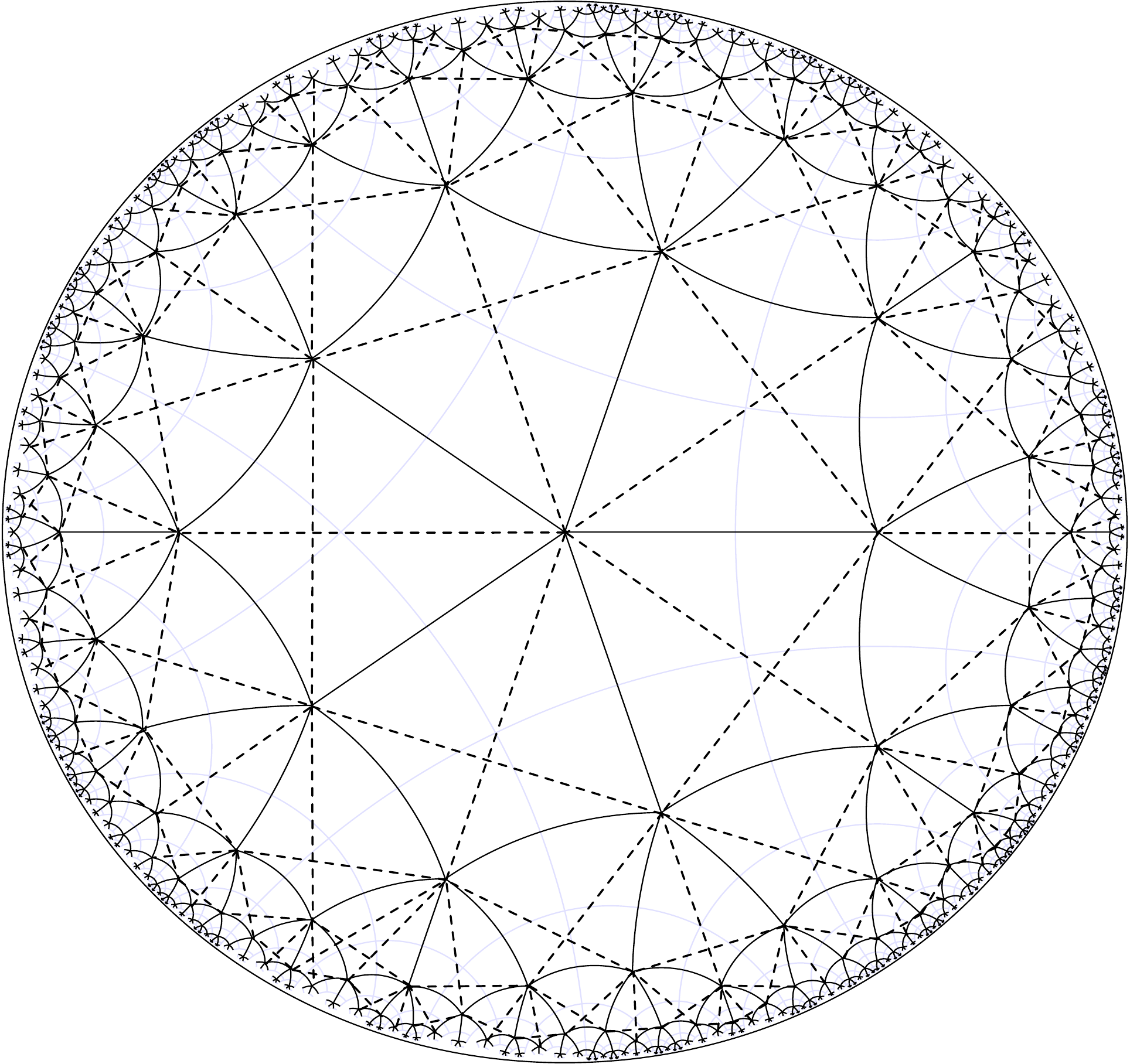}
  \caption{Matching lattice $\hat{G}$ of the $\{4,5\}$ hyperbolic lattice. The new edges, drawn as dashed lines, connect the vertices of each face in a clique.}
  \label{fig:matching-example}
\end{figure}

Now imagine that we color each site black with probability $p$
and white with probability $1-p$.  We connect the black sites through
the edges of $G$, and connect the white sites through the edges of $\hat{G}$.  
Each black component is surrounded by white sites and each white component is
surrounded by black sites. 

The crucial observation is that the white sites surrounding a given
black component are connected in $\hat{G}$;  conversely, the black
sites surrounding a given white component are connected in $G$
(see~\cite[Fig.~1]{mertens:ziff:16} for an example).  Therefore, two
or more infinite black clusters exist if and only if they are
separated by an infinite white cluster, and a unique infinite black cluster exists if and only if there
is no infinite white cluster.  Thus $p < \pcsite(\hat{G})$ implies $1-p > \pusite(G)$ and vice versa, which proves
\eqref{eq:site-duality}.

Note that for the triangular lattices $\{3,Q\}$ we have $G=\hat{G}$, and so
\begin{equation}
  \label{eq:self-matching}
  \pusite(\{3,Q\}) = 1 - \pcsite(\{3,Q\}) \, . 
\end{equation}
For amenable graphs we also have $\pusite = \pcsite$, so that~\eqref{eq:self-matching} gives $\pcsite=1/2$.  This is the well-known result for all planar, amenable graphs with triangular faces.

Having reduced the problem of computing $p_u$ on one lattice
to the problem of computing $p_c$ on another, we
are still faced with the problem of how to compute $p_c$ on exponentially
growing lattices. This is where invasion percolation comes in.



\section{Invasion Percolation and Measuring the Threshold}
\label{sec:invasion}

Invasion percolation is a stochastic growth model that was introduced
as a model for fluid transport through porous media~\cite{lenormand:bories:80,chandler:etal:82,wilkinson:willemsen:83}.  
The growth process starts with a single vertex of the underlying graph
as the seed of the invasion cluster.  In the variant relevant to site percolation, 
we assign each of its neighboring vertices a random weight uniformly distributed between zero and one. 
We then add the neighbor with the smallest weight to the cluster, yielding a 
cluster of mass $N=2$.  This process is iterated: at each step, we assign random weights to each previously 
unassigned vertex in the cluster's neighborhood, and add the
neighboring vertex with the smallest weight to the cluster, increasing
the mass $N$ by $1$. For bond percolation, we assign weights to edges
rather than vertices, and we extend the invasion cluster along the
edge incident to it with the smallest weight.

For our purposes, the benefit of invasion percolation is that we do
not need to store a lattice large enough to hold the largest cluster
that we want to grow: for non-amenable lattices this would be
computationally infeasible. Instead we only need to store the vertices
belonging to the cluster, and the neighboring vertices to which we
have assigned weights. Since the coordination number $Q$ of the
lattice is fixed, the total number of vertices we need to keep track
of grows only linearly with the mass of the cluster.  Since we do not
store a lattice we need to keep track of the vertices that we have
explored so far, and we also need to find the vertex in the
boundary of the cluster with the smallest weight. We use two data
structures to achieve this. A \emph{set} is used to hold all vertices
that have been assigned weights, and a \emph{priority queue} that
holds the hull, i.e. all vertices that have been assigned weights and
that are currently not part of the cluster
\cite{dasgupta:algorithms}. The priority queue lets us select and
remove the lowest-weight vertex from the hull or add new vertices to
it in time logarithmic in the size of the hull. The set lets us remove
and insert vertices or search for a vertex in time logarithmic in the
number of vertices. Modern programming languages provide ready-to-use
implementations of data structures like these.  We used the container
classes \texttt{set} and \texttt{priority\_queue} from the C++
standard library~\cite{vanweert:gregoire:16}.

In order to achieve the logarithmic time complexity of the container
classes, the vertices have to be sortable. Our vertex labeling scheme
for the vertices (Appendix~\ref{sec:implementation}) provides a
natural, lexicographic ordering. Since this scheme requires $O(k)$
memory for a vertex at distance $k$ from the origin, we save memory
(and time) by storing the actual vertices only in the set, whereas the
priority queue holds the weight and a pointer (an iterator in C++
lingo) to the corresponding vertex in the set.

The efficiency of this approach is independent of dimensionality or the exponential growth rate of non-amenable lattices.  One the other hand, it requires a labeling of vertices that makes it easy to compute the labels of its neighbors. For $\Z^d$ this is trivial, but for the hyperbolic lattices $\{P,Q\}$ this is not straightforward. We present
our labeling scheme in Appendix~\ref{sec:implementation}.

\emph{A priori}, invasion percolation differs from classical Bernoulli percolation, where each vertex is independently occupied with probability $p$.  But invasion percolation reproduces, both qualitatively and quantitatively, Bernoulli percolation at criticality~\cite{chayes:chayes:newman:85,haeggstroem:peres:schonmann:99}.  We can think of this connection intuitively as follows.  Since the vertex weights are uniform in the unit interval, one way to implement Bernoulli percolation is to declare a vertex occupied if its weight is less than or equal to $p$.  At $p=p_c$, the occupied sites possess one or more infinite components.  If the initial seed vertex is not in an infinite component, invasion percolation will force its way outward using weights greater than $p_c$; but once the invasion cluster touches an infinite component, it will grow into it, extending the cluster to infinite mass by adding vertices of weight at most $p_c$.  

\begin{figure}
  \centering
  \includegraphics[width=\columnwidth]{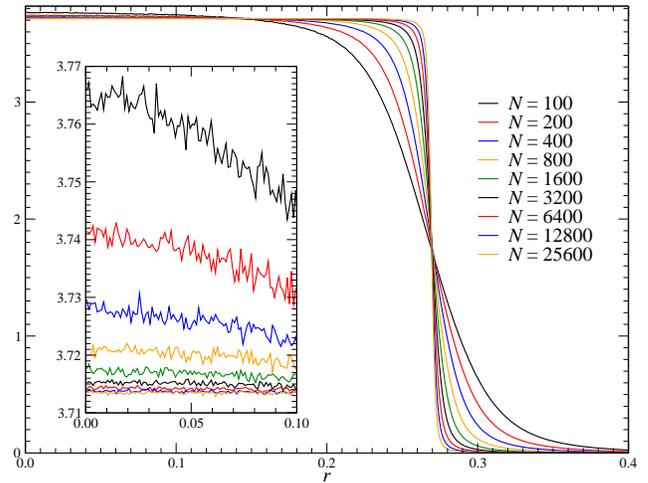}
  \vspace{-24pt}
  \caption{Distribution of weights on site percolation invasion clusters of mass
    $N$ in the $\{3,7\}$ hyperbolic lattice.}
  \label{fig:weights}
\end{figure}

The connection between invasion and percolation suggests that if we keep track of the weights of the vertices added at each step to the invasion cluster, these weights are asymptotically uniform in the interval $[0,p_c]$.  That is, if $w_N(r) \,\D r$ is the fraction of vertices with weight in $(r, r+\D r)$ in an invasion cluster of mass $N$, we have
\begin{equation}
  \label{eq:weight-distribution}
  \lim_{N \to \infty} w_N(r) = \begin{cases}
    1/p_c & 0 \le r \le p_c \\
    0        & p_c < r \le 1 \, .
  \end{cases}
\end{equation}
Fig.~\ref{fig:weights} shows the convergence of the weight distribution $w_N$ to the step function~\eqref{eq:weight-distribution}.  

This suggests several possible estimators of $p_c$.  One is the value of $p$ at the crossover point where $w_N(r)$ drops from $1/p_c$ to $0$, which in Fig.~\ref{fig:weights} takes place at $w_N \approx 1.7$.  Another is to take the reciprocal of $w_N(r)$ for any $r < p_c$.  In particular, the inset in Fig.~\ref{fig:weights} shows $w_N(r)$ for $0 \le r \le 0.1$, which converges to $1/p_c \approx 3.713$ as $N$ increases.

However, in our computations we use another estimator provided by
invasion percolation. Let $B(N)$ denote the number of vertices that
have assigned weights in the course of building a cluster with mass
$N$, i.e., which are either in the cluster or one of its neighbors.
Since almost all of the $N$ vertices actually added to the cluster
have weight less than or equal to $p_c$, we have
\begin{equation}
  \label{eq:volume-surface-ratio}
  \lim_{N \to \infty} \frac{N}{B(N)} = p_c \, .
\end{equation}
This has been proven for bond percolation in $\Z^2$~\cite{chayes:chayes:newman:85}, but it is believed to hold more generally.  Based on this supposition, we use $N/B(N)$ as an estimate of $p_c$. See also~\cite{leath:76a} for a convincing argument as to why $N/B(N)$ is a good estimator for $p_c$ on every lattice.

The estimator $N/B(N)$ is extremely easy to compute, since $N$ and
$B(N)$ are simply integers given by the progress of the invasion
percolation process.  Moreover, it seems to have excellent finite-size
scaling and a very small statistical variaance.  Recall that on a $Q$-regular tree, we have $p_c = 1/(Q-1)$ for both site and bond percolation.  On such a tree, a cluster of mass $N$ is surrounded by exactly $(Q-2)N+2$ neighboring vertices, so
\begin{equation}
  \label{eq:volume-surface-ratio-tree}
  \frac{N}{B(N)} = \frac{N}{(Q-1)N+2} = \frac{p_c}{1+\frac{2}{(Q-1)N}} \, .
\end{equation}
Hence on a tree, the value of $N/B(N)$ does not at all depend on the
random weights chosen to grow the invasion cluster. It is a
deterministic quantity. To some extent, this property is preserved on
hyperbolic lattices: here the standard deviation in $N/B(N)$ is small
and it decays exponentially with $P$, see
Appendix~\ref{sec:technicalities}.
 
To check whether the finite size scaling
\eqref{eq:volume-surface-ratio-tree} also applies to hyperbolic
lattices, we plotted the quantity 
\begin{displaymath}
  \frac{B(N)}{N} - \frac{B(2N)}{2N}
\end{displaymath}
vs. $N$. For the tree~\eqref{eq:volume-surface-ratio-tree}, this quantity is exactly $N^{-1}$. For hyperbolic lattices we find that this quantity scales as $N^{-\delta}$ as shown in Fig.~\ref{fig:check-scaling}.  For finite $P$ we have $\delta < 1$, but Fig.~\ref{fig:delta-convergence} shows that $\delta$ converges to $1$ as $P$ increases, and that this convergence gets faster as $Q$ increases.  Thus $N/B(N)$ converges to $p_c$ almost as quickly as it does on a tree.

\begin{figure}
  \centering
  \includegraphics[width=\columnwidth]{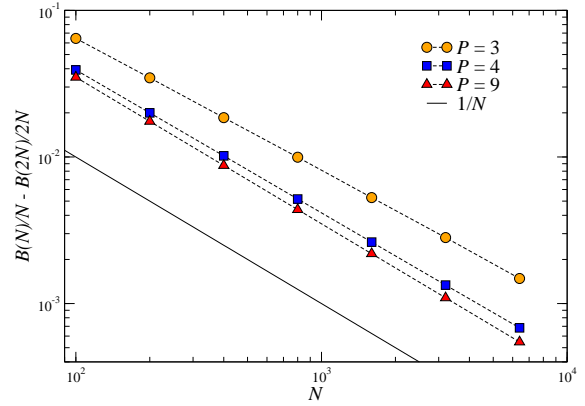}
  \vspace{-24pt}
\caption{$\frac{B(N)}{N}-\frac{B(2N)}{2N} \sim N^{-\delta}$ for bond percolation and $Q=7$.}
  \label{fig:check-scaling}
\end{figure}

\begin{figure}
  \centering
  \includegraphics[width=\columnwidth]{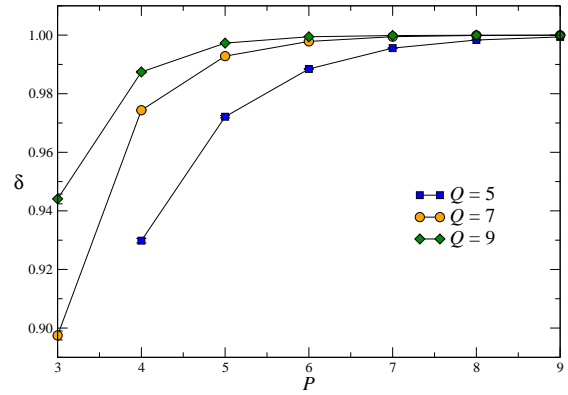}
  \vspace{-24pt}
\caption{Finite size scaling exponent $\delta$ for bond percolation.  Note the convergence to its value $\delta = 1$ for the tree.}
  \label{fig:delta-convergence}
\end{figure}

These results motivated us to fit our numerical data to the form
\begin{equation}
  \label{eq:sv-finite}
  \frac{N}{B(N)} = \frac{p_c}{1 + b N^{-\delta}} \, .
\end{equation}
Using finite-size scaling of this form, we can extrapolate from invasion clusters of size $N = 2^k \cdot 100$ for $k = 0,\ldots,10$ to $N = \infty$.  Fig.~\ref{fig:pc_finite} shows our results for the $\{7,3\}$ lattice.  

\begin{figure}
  \centering
  \includegraphics[width=\columnwidth]{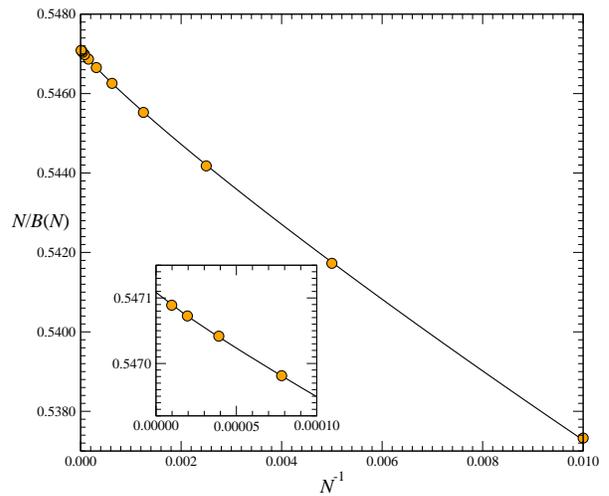}
  \vspace{-24pt}
\caption{The ratio $N/B(N)$ for invasion clusters of mass $N$ for site percolation on the $\{7,3\}$ hyperbolic lattice. The line is a fit of~\eqref{eq:sv-finite}, and extrapolating to $N = \infty$ gives our estimate of $\pcsite$.}
  \label{fig:pc_finite}
\end{figure}

All of the above holds both for bond and site percolation on hyperbolic
lattices as well as on their matching lattices.

\section{Numerical Results and Comparison with Rigorous Bounds}
\label{sec:results}

We ran the invasion percolation algorithm on hyperbolic lattices with
all values of $P, Q \le 9$ and their matching lattices to find
$\pcsite$, $\pusite$ and $\pcbond$. Our results are shown in
Table~\ref{tab:thresholds}, and are accurate to at least six
digits. We have listed all previously known data on thresholds in
Table~\ref{tab:previous} for comparison. The computational resources
rewuired by our computations are discussed in
Appendix~\ref{sec:technicalities}.

\begin{table*}
  \centering
  \begin{tabular}{c|lllllll}
    \diagbox{$P$}{$Q$} & 3 & 4 & 5 & 6 & 7 & 8 & 9 \\\hline
    3 &  & & & \textbf{0.50000000\ldots}
       & 0.26931171(7) & 0.20878618(9) & 0.17157685(3) \\
    4 & \multicolumn{1}{c}{$\pcsite$} & \textbf{0.59274605\ldots} & 0.29890539(6) &  0.22330172(3)
       & 0.17979594(1) & 0.151035321(9) &  0.13045673(2) \\
    5 & & 0.3714769(1) & 0.26186660(5) & 0.20498805(2)
       & 0.16914045(2) & 0.144225373(6) & 0.125818563(3) \\
    6 & \textbf{0.69704023\ldots} & 0.34601617(5)  &  0.25328042(1)  &  0.20115330(2)  &  0.167154812(2)  &  0.143091873(7)  &  0.125124021(3) \\
    7 & 0.54710885(10)  &  0.33788595(1)  &  0.25093250(2)  &  0.200268034(4)  &  0.166762201(2)  &  0.142896751(1)  &  0.1250183755(6) \\
    8 & 0.5221297(4)  &  0.33500594(4)  &  0.250264873(6)  &  0.200061544(2)  &  0.166685043(2)  &  0.1428636871(5)  &  0.1250026592(2) \\
    9 & 0.51118943(5)  &  0.33395047(2)  &  0.2500745527(7)  &  0.2000139338(3)  &  0.1666701374(4)  &  0.1428582040(2)  &  0.1250003781(1)\\
    $\infty$ & 0.50000000\ldots & 0.33333333\ldots & 0.25000000\ldots &
        0.20000000\ldots & 0.16666666\ldots &
       0.14285714\ldots & 0.12500000\ldots
  \end{tabular} \\[1ex]
   \begin{tabular}{c|lllllll}
    \diagbox{$P$}{$Q$} & 3 & 4 & 5 & 6 & 7 & 8 & 9 \\\hline
    3 &  & & & \textbf{0.50000000\ldots}
       &  0.73068829(7)  &  0.79121382(9)  &  0.82842315(3)\\
    4 & \multicolumn{1}{c}{$\pusite$} & \textbf{0.59274605\ldots} &  0.8266384(5)  &  0.87290362(7)  &  0.89897645(3)  &  0.91607962(7)  &  0.92820312(2)\\
    5 & & 0.8500010(2)  &  0.89883342(7)  &  0.9226118(1)  &  0.93725391(5)  &  0.94722182(5)  &  0.95445115(2)\\
    6 & \textbf{0.69704023\ldots} & 0.8980195(2)  &  0.92817467(4)  &  0.94427121(6)  &  0.95445118(6)  &  0.96148136(1)  &  0.96662953(1)\\
    7 & 0.8550371(5)  &  0.9222771(1)  &  0.94426351(9)  &  0.95643895(6)  &  0.96424001(1)  &  0.96966910(1)  &  0.97366600(2)\\
    8 & 0.8911842(4)  &  0.9371043(1)  &  0.95444794(1)  &  0.96424002(1)  &  0.970562733(8)  &  0.97498433(2)  &  0.978250607(6)\\
    9 & 0.9119080(1)  &  0.94714549(5)  &  0.96147998(4)  &  0.969669063(10)  &  0.97498439(2)  &  0.978713769(9)  &  0.981475139(7)\\
    $\infty$ & 1.00000000\ldots & 1.00000000\ldots & 1.00000000\ldots &
        1.00000000\ldots & 1.00000000\ldots &
       1.00000000\ldots & 1.00000000\ldots
  \end{tabular}\\[1ex]
    \begin{tabular}{c|lllllll}
    \diagbox{$P$}{$Q$} & 3 & 4 & 5 & 6 & 7 & 8 & 9 \\\hline
    3 &  & & & \textbf{0.34729635\ldots}
       & 0.1993505(5)  &  0.1601555(2)  &  0.1355650(1) \\
    4 & \multicolumn{1}{c}{$\pcbond$} & \textbf{0.50000000\ldots} &
     0.2689195(3)  &  0.20714787(9)  &  0.17004767(3)  &  0.14467876(3)  &  0.12607213(1)  \\
    5 & & 0.3512228(3)  &  0.25416087(3)  &  0.20141756(5)  &  0.16725887(2)  &  0.143140108(5)  &  0.125148983(6)\\
    6 & \textbf{0.65270364\ldots} & 0.3389049(2)  &  0.25109739(4)  &  0.20031239(1)  &  0.166777706(8)  &  0.142903142(2)  &  0.125021331(2) \\
    7 & 0.5305246(8)  &  0.33526580(4)  &  0.25030153(2)  &  0.20006995(1)  &  0.166687541(3)  &  0.1428645814(8)  &  0.1250030259(6)\\
    8 & 0.5136441(4)  &  0.33402630(3)  &  0.250083308(5)  &  0.200015586(4)  &  0.166670553(1)  &  0.1428583305(8)  &  0.1250004231(2)\\
    9 & 0.5067092(1)  &  0.33358404(2)  &  0.250022914(1)  &  0.200003441(2)  &  0.1666673834(3)  &  0.1428573314(2)  &  0.12500005836(6)\\
      $\infty$ & 0.50000000\ldots & 0.33333333\ldots & 0.25000000\ldots &
        0.20000000\ldots & 0.16666666\ldots &
       0.14285714\ldots & 0.12500000\ldots
  \end{tabular}
  \caption{Percolation thresholds for hyperbolic lattices $\{P,Q\}$. Values
    for the Euclidean lattices (bold) are added for comparison. Note
    that $\pusite = 1-\pcsite$ for triangular lattices ($P=3$), and that $\pubond(\{P,Q\}) = 1-\pcbond(\{Q,P\})$. The
    row $P=\infty$ contains the values $p_c=1/(Q-1)$ and $p_u=1$ for
    the Bethe lattice.  }
  \label{tab:thresholds}
\end{table*}

\begin{table}
  \centering
  \begin{tabular}{c|lll}
     $\{P,Q\}$ & $\pcbond$ & $\pubond$ & Ref.\\\hline
     $\{4,5\}$ & 0.27 & 0.52 & \cite{baek:minhagen:kim:09} \\
     $\{3,7\}$ & 0.20 & 0.37 & \cite{baek:minhagen:kim:09} \\
     $\{7,3\}$ & 0.53 & 0.72 & \cite{baek:minhagen:kim:09} \\
                    & 0.551(10) & 0.810(10) & \cite{gu:ziff:12} \\
    $\{5,5\}$ & 0.263(10) & 0.749(10) & \cite{gu:ziff:12} 
  \end{tabular}
\caption{Previous results for percolation thresholds on hyperbolic lattices. Note that the claimed results for $\{7,3\}$ and $\{3,7\}$ of~\cite{baek:minhagen:kim:09} violate the identity $\pubond(\{P,Q\}) = 1-\pcbond(\{Q,P\})$.}
  \label{tab:previous}
\end{table}
 
These results allow us to explore the dependence of percolation thresholds on $P$ and $Q$.  
Since the shortest loop in a $\{P,Q\}$-lattice has length $P$, the lattice becomes more and more treelike as $P$ gets
larger, and approaches a Bethe lattice as $P \to \infty$.  As mentioned above, the critical densities
for site and bond percolation on a Bethe lattice are known exactly,
\begin{equation}
  \label{eq:pc-pu-bethe}
  p_c(\{\infty,Q\}) = \frac{1}{Q-1} 
  \quad \text{and} \quad 
  p_u(\{\infty,Q\}) = 1 \, .
\end{equation}
We can see this convergence in Table~\ref{tab:thresholds}, where the
rows for $P=9$ and $P = \infty$ are almost identical
for $\pcsite$ and $\pcbond$.  

There is another, more subtle sense in which $\{P,Q\}$ becomes treelike in the limit $Q \to \infty$ with $P$ held fixed: while there exist short loops, the fraction of paths in the graph that complete a loop tends to zero, again suggesting that $p_c$ and $p_u$ should converge to their values on the Bethe lattice.  We can see this in Table~\ref{tab:thresholds} where, even for $P=3$ and $P=4$, both $\pcsite$ and $\pcbond$ quickly converge to $1/(Q-1)$ as $Q$ grows.

This raises the interesting question of how these thresholds approach their values on the Bethe lattice as $P$ or $Q$ increases.  To learn more about this dependence, we start with some rigorous bounds.  The following two theorems 
provide an upper and a lower bound on the site and bond percolation thresholds.
\begin{theorem}
\label{thm:pc-bounds}
Let $p_c(\{P,Q\})$ denote the percolation threshold for either site or bond percolation on the hyperbolic lattice $\{P,Q\}$. Then
  \begin{equation}
    \label{eq:pc-bounds}
      \frac{1}{z_{P,Q}} \le p_c(\{P,Q\}) \le \frac{1}{\lambda_{P,Q}} \, ,
  \end{equation}
  where $z_{P,Q}$ is the largest real root of the polynomial
    \begin{equation}
  \label{eq:char-poly-SAW}
 f_{P,Q} = z^P-(Q-1)z^{P-1}+z+(Q-3) \, .
  \end{equation}
and $\lambda_{P,Q}$ is the largest real root of the polynomial $R_{P,Q}$ given in~\eqref{eq:pc-upper-char-poly}.
\end{theorem}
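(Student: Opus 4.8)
The plan is to establish the two inequalities independently: the lower bound by a first‑moment comparison of the open cluster of the origin with self‑avoiding walks, and the upper bound by comparing $\{P,Q\}$ with a self‑similar spanning subtree. Both comparisons work verbatim for site and for bond percolation, which is why the theorem covers either model. For the lower bound, start from the first‑moment inequality: if $\mathcal C$ is the open cluster (of sites, resp.\ of bonds) containing a fixed vertex, then $\{|\mathcal C|=\infty\}$ forces, for every $n$, the existence of an open self‑avoiding walk of length $n$ from that vertex, since any infinite, locally finite, connected graph contains a ray from each of its vertices (König's lemma). Hence $\Pr[\,|\mathcal C|=\infty\,]\le c_n\,p^{\,n}$ for all $n$, where $c_n$ is the number of length‑$n$ self‑avoiding walks, and letting $n\to\infty$ gives $p_c(\{P,Q\})\ge 1/\mu$ with $\mu=\lim_n c_n^{1/n}$ the connective constant. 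It then remains to show $\mu\le z_{P,Q}$. For this I would over‑count: every self‑avoiding walk is in particular a non‑backtracking walk none of whose $P$‑step windows traces out a full face, and no walk in this larger class can ever close a shortest cycle. Remembering how many of the last steps have run consecutively along a common face — and using inclusion–exclusion for the single edge two adjacent faces share — yields a non‑negative transfer matrix with $O(P)$ states; the number $w_n\ge c_n$ of such walks obeys the recurrence $w_n=(Q-1)w_{n-1}-w_{n-P+1}-(Q-3)w_{n-P}$, whose characteristic polynomial is exactly $f_{P,Q}$ in~\eqref{eq:char-poly-SAW}. Perron–Frobenius theory then identifies $z_{P,Q}$, the largest real root of $f_{P,Q}$, as the growth rate $\lim_n w_n^{1/n}$, so $\mu\le z_{P,Q}$ and $p_c\ge 1/z_{P,Q}$.

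For the upper bound I would exhibit an infinite tree $T$ embedded in $\{P,Q\}$ as a spanning subgraph (all vertices, a subset of the edges) — for instance the canonical breadth‑first exploration tree associated with the vertex‑labelling of Appendix~\ref{sec:implementation}. In that construction each vertex carries one of finitely many types, and its type determines the number and types of its children, so $T$ is subperiodic and its level sizes satisfy a linear recurrence whose characteristic polynomial is the polynomial $R_{P,Q}$ of~\eqref{eq:pc-upper-char-poly}, with growth rate equal to its largest real root $\lambda_{P,Q}$. For a subperiodic tree $\operatorname{br}(T)=\operatorname{gr}(T)=\lambda_{P,Q}$, and by Lyons' theorem $p_c^{\mathrm{site}}(T)=p_c^{\mathrm{bond}}(T)=1/\operatorname{br}(T)$ (in either model the open cluster of the root is a Galton–Watson process with offspring law $\mathrm{Bin}(\#\text{children},p)$). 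Since an infinite open cluster of $T$ under Bernoulli($p$) percolation is automatically an infinite open cluster of $\{P,Q\}$, we get $p_c(\{P,Q\})\le p_c(T)=1/\lambda_{P,Q}$.

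The two comparison steps are routine; the real work lies in the two transfer‑matrix computations. For the lower bound I expect the delicate point to be confirming that the "non‑backtracking, no $P$‑window equals a face" family really contains \emph{all} self‑avoiding walks, and that bookkeeping the overlap between adjacent faces produces precisely the correction $-w_{n-P+1}-(Q-3)w_{n-P}$ rather than an over‑ or under‑subtraction. For the upper bound the analogous obstacle is pinning down the finite set of vertex types of $\{P,Q\}$ together with their transition rules so as to read off $R_{P,Q}$ explicitly, which is exactly where the combinatorics of Appendix~\ref{sec:implementation} (and the counting of Appendix~\ref{sec:counting}) is needed. As consistency checks I would verify that both bounds collapse to the exact Bethe‑lattice value: as $P\to\infty$ both $z_{P,Q}$ and $\lambda_{P,Q}$ tend to $Q-1$, so \eqref{eq:pc-bounds} pinches $p_c$ to $1/(Q-1)$, and that $1/z_{P,Q}$ and $1/\lambda_{P,Q}$ bracket the numerical thresholds reported in Table~\ref{tab:thresholds}.
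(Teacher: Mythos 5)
Your proposal matches the paper's argument essentially step for step: the lower bound is Hammersley's inequality $p_c\ge 1/\mu$ combined with an over-count of self-avoiding walks by non-backtracking walks forbidden from completing a face, whose transfer matrix (Appendix~\ref{sec:saw}) has characteristic polynomial proportional to $f_{P,Q}$; the upper bound compares $\{P,Q\}$ to its BFS spanning tree, whose branching ratio is the largest root of $R_{P,Q}$ via the level-size recurrences of Appendix~\ref{sec:counting}. Your invocation of subperiodicity and Lyons' theorem to justify $p_c(T)=1/\mathrm{br}(T)=1/\mathrm{gr}(T)$ is, if anything, slightly more careful than the paper's statement that the threshold of a tree is the reciprocal of its branching ratio.
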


\begin{proof}
A classic result of Hammersley~\cite{hammersley:57} shows that 
$p_c \ge 1/\mu$ where $\mu$ is the connective constant of the lattice, 
i.e., the exponential growth rate of the number of self-avoiding walks.  
The connective constant for hyperbolic lattices is not known analytically, 
but in Appendix~\ref{sec:saw} we show that $\mu \le z_{P,Q}$ by counting 
paths that do not complete a loop around a face of the lattice.

The upper bound comes from the fact that percolation on a
subgraph $H\subseteq \{P,Q\}$ implies percolation on $\{P,Q\}$,
\begin{equation}
  \label{eq:subgraph}
  p_c(\{P,Q\}) \le p_c(H) \, .
\end{equation}
For $H$ we choose the breadth-first search (BFS) tree of
$\{P,Q\}$. The bond and site percolation thresholds of a tree equal
the reciprocal of its branching ratio. For the BFS, the branching
ratio is given by $\lim_{k \to \infty} n(k+1)/n(k)$, where $n(k)$ denotes
the number of vertices in the hyperbolic lattice at graph distance $k$
from the origin. In Appendix~\ref{sec:counting} we show that $n(k)$ is given by the 
linear recurrence~\eqref{eq:n-p-even-m-even}, \eqref{eq:n-p-even-m-odd},
\eqref{eq:n-p-odd} or~\eqref{eq:n-p-3}, depending on the value of $P$. 
Eqs.~\eqref{eq:pc-upper-char-poly} are the characteristic polynomials of
the recurrences.  The branching ratio $\lambda_{P,Q}$ is the
largest real root of the characteristic polynomial.
\end{proof}

What does Theorem~\ref{thm:pc-bounds} tell us about the how $p_c$ approaches $1/(Q-1)$?  For large $P$ or large $Q$, the largest root of $f_{P,Q}$ (indeed, the unique positive root) converges to $Q-1$. If we plug in the ansatz $z_{P,Q} = (Q-1)-\eps$ and expand $f_{P,Q}$ to linear order in $\eps$, we get
\begin{equation}
  \label{eq:root-SAW}
  z_{P,Q} = Q - 1 - \frac{2(Q-2)}{(Q-1)^{P-1}} + O\big( (Q-1)^{-(2P-3)} \big) \, ,
\end{equation} 
so the lower bound in~\eqref{eq:pc-bounds} scales as 
\begin{equation}
\label{eq:pc-lower-bound}
p_c \ge \frac{1}{Q-1} + O\left( \frac{1}{(Q-1)^P} \right) \, .
\end{equation}
Similarly, for $P > 4$ even~\eqref{eq:pc-upper-char-poly-P-even-m-even}, \eqref{eq:pc-upper-char-poly-P-even-m-odd}, we have
\begin{equation}
\label{eq:bound-P-even}
\lambda_{P,Q} = Q - 1 - \frac{Q(Q-2)}{(Q-1)^{P/2}} + O\left( (Q-1)^{-(P-2)} \right) \, , 
\end{equation}
and for $P > 3$ odd~\eqref{eq:pc-upper-char-poly-P-odd} we have
\begin{equation}
\label{eq:bound-P-odd}
\lambda_{P,Q} = Q - 1 - \frac{2(Q-2)}{(Q-1)^{(P-1)/2}} + O\left( (Q-1)^{-(P-3)} \right) \, .
\end{equation}
Thus the upper bound in~\eqref{eq:pc-bounds} scales as
\begin{equation}
\label{eq:pc-upper-bound}
p_c \le \frac{1}{Q-1} + O\left( \frac{1}{(Q-1)^{\lceil P/2 \rceil}} \right) \, .
\end{equation}

Combining~\eqref{eq:pc-lower-bound} and~\eqref{eq:pc-upper-bound} 
suggests that the critical density of the hyberbolic lattice should scale with $P$ as
\begin{equation}
  \label{eq:pc-scaling}
  p_c(\{P,Q\}) = \frac{1}{Q-1} + \frac{c}{\mu_Q^P} 
\end{equation}
for some constant $c$ and some $\sqrt{Q-1} \le \mu_Q \le Q-1$.  In Fig.~\ref{fig:pcsite-scaling} we plot $\pcsite(\{P,Q\})-1/(Q-1)$ versus $P$ on a semilog scale, and the straight lines indicate that scaling of the form~\eqref{eq:pc-scaling} holds.  We also compare $\pcsite-1/(Q-1)$ for $Q=7$ directly to our rigorous bounds in Fig.~\ref{fig:pc-bounds}.

We can look more closely at the cases $P=3$ and $P=4
$ when $Q$ is large.  If $P=3$~\eqref{eq:pc-upper-char-poly-P-3}, we have
\begin{align}
\lambda_{P,Q} 
&= \frac{Q + \sqrt{(Q-6) (Q-2)} - 4}{2} \nonumber\\
&= Q - 4 - \frac{1}{Q-4} + O(1/Q^{3}) \, , \label{eq:bound-P-3}
\end{align}
and when $P=4$~\eqref{eq:pc-upper-char-poly-P-4} we have
\begin{align}
\lambda_{P,Q} 
&= \frac{Q + \sqrt{Q(Q-4)} - 2}{2} \nonumber \\
&= Q - 2 - \frac{1}{Q-2} + O(1/Q^3) \, . \label{eq:bound-P-4}
\end{align}
We can explain~\eqref{eq:bound-P-3} and~\eqref{eq:bound-P-4} by thinking about breadth-first search on these lattices.  If $P=3$, then each vertex $v$ in the BFS tree has at least one edge pointing back to its parent, and two edges pointing laterally to other vertices $u,w$ at the same level.  Moreover, the triangle below each of those edges has a corner on the next level which can be reached either from $v$ or from its other ``parent'' $u$ or $w$.  This reduces the branching ratio to at most $Q-4$.  

Similarly, for $P=4$ most vertices $v$ in the BFS tree belong to a square face where $v$ is the farthest from the root, and $v$'s neighbors on that face are both closer to the origin.  This gives $v$ two ``parents'' in the previous layer, and reduces the branching ratio to $Q-2$.  

On the other hand, for $P \ge 5$ the bound of~\eqref{eq:bound-P-odd} converges to $Q-1$ for large $Q$.  This indicates that for most vertices, only one of their neighbors is in the previous layer, and the other $Q-1$ are in the succeeding layer.

\begin{figure}
  \centering
  \includegraphics[width=\columnwidth]{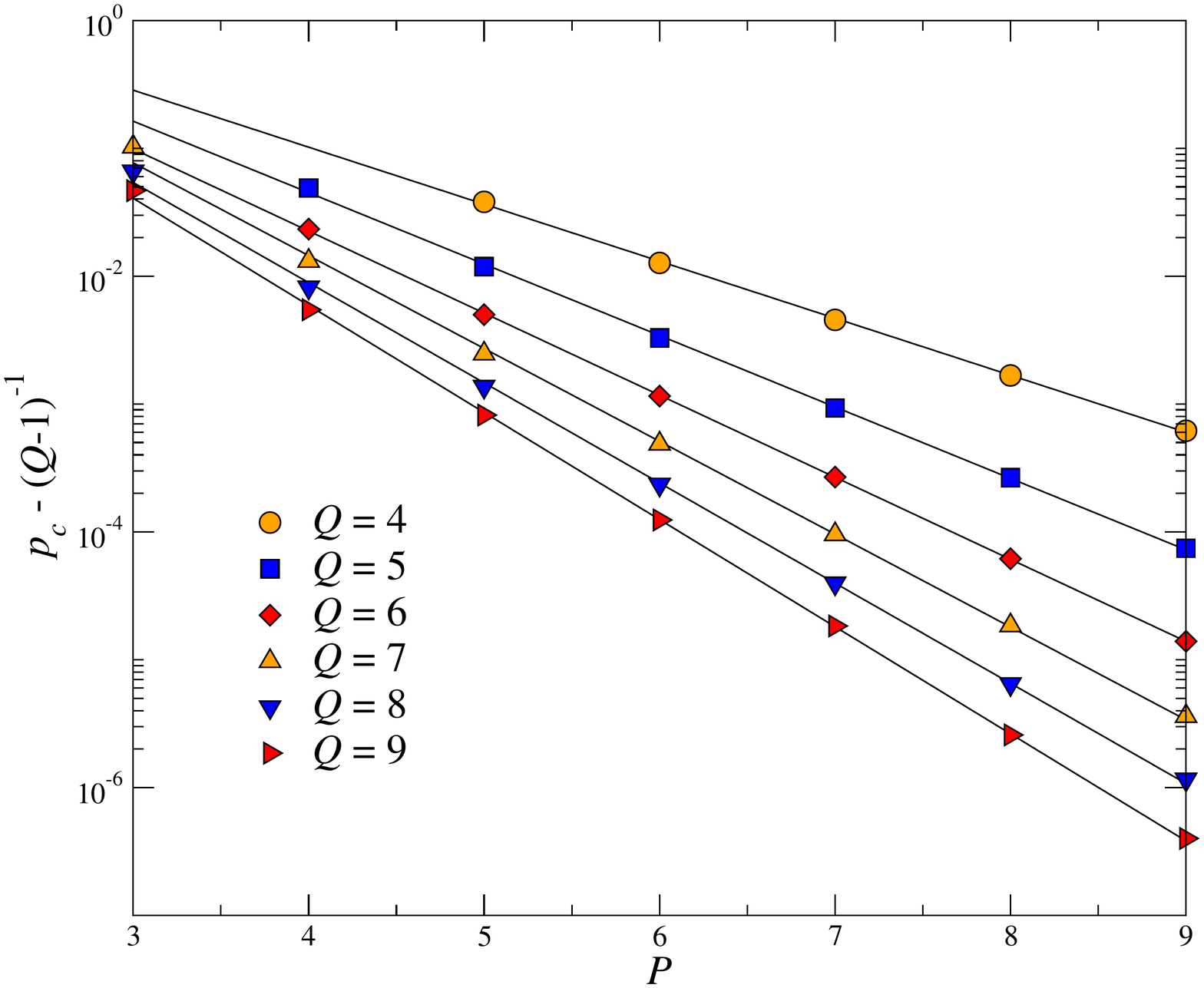}
  \vspace{-24pt}
  \caption{Scaling of $\pcsite(\{P,Q\})$ with $P$.}
  \label{fig:pcsite-scaling}
\end{figure}

\begin{figure}
  \centering
  \includegraphics[width=\columnwidth]{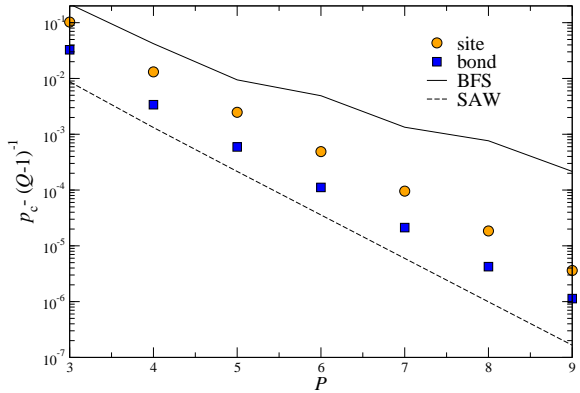}
  \vspace{-24pt}
  \caption{Comparison of $\pcsite(\{P,Q\})$ for $Q=7$ with the upper and lower bounds from Theorem~\ref{thm:pc-bounds}.}
  \label{fig:pc-bounds}
\end{figure}

We also prove the following simple bounds on the uniqueness threshold (see also~\cite{Delfosse:2013,lee-baek:2012} for some combinatorial bounds for $\{5,5\}$, $\{5,4\}$, and $\{4,5\}$).

\begin{theorem}
  \label{thm:pu-bounds}
  Let $\pusite$ be the uniqueness threshold for site percolation on
  the hyperbolic lattice $\{P,Q\}$, and let 
  \[
  a = (P-2)(Q-2)-2 \, . 
  \] 
  Then
  \begin{equation}
     \label{eq:pu-bounds}
     1-\frac{2}{a + \sqrt{a^2-4}} \le \pusite \le 1-\frac{1}{Q(P-2)-1} \, .
  \end{equation}
\end{theorem}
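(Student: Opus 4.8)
The first move is the matching-lattice identity $\pusite(\{P,Q\}) = 1 - \pcsite(\hat G)$ established in Section~\ref{sec:theory}, which turns the two claimed inequalities into the equivalent statement
\[
  \frac{1}{Q(P-2)-1} \;\le\; \pcsite(\hat G) \;\le\; \frac{2}{a+\sqrt{a^2-4}}, \qquad a = (P-2)(Q-2)-2 .
\]
So the theorem reduces entirely to bounding the ordinary site threshold of the matching lattice $\hat G$, which is the object I would work with throughout.

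For the lower bound on $\pcsite(\hat G)$ (equivalently the upper bound on $\pusite$), I would first compute the degree of $\hat G$. Each vertex $v$ of $\{P,Q\}$ lies on $Q$ faces, each a $P$-gon, and completing a face to a clique adds the $P-3$ vertices of that face that are not already neighbours of $v$. Consecutive faces around $v$ share an edge — hence a neighbour of $v$, not a new vertex — while non-consecutive faces meet only at $v$, so the $Q(P-3)$ new neighbours are distinct from each other and from the $Q$ original ones; thus $\hat G$ is regular of degree $Q(P-2)$ (consistent with $\hat G = G$ when $P=3$). Hammersley's bound $\pcsite \ge 1/\mu$, together with the crude estimate $\mu \le \Delta-1 = Q(P-2)-1$ valid for any graph of maximum degree $\Delta$, then gives the lower bound, exactly as in the proof of Theorem~\ref{thm:pc-bounds} but with the trivial bound on the connective constant.

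For the upper bound on $\pcsite(\hat G)$ (equivalently the lower bound on $\pusite$), I would use monotonicity under subgraphs, $\pcsite(\hat G) \le \pcsite(H) \le \pcsite(T)$, where $H \subseteq \hat G$ is a suitable connected subgraph and $T$ is a breadth-first spanning tree of $H$; since $p_c$ of a tree is the reciprocal of its branching ratio, it suffices to produce an $H$ whose BFS layers $n(k)$ grow at rate $\beta$, the larger root of $z^2 - a z + 1$, because $2/(a+\sqrt{a^2-4}) = 1/\beta$. The natural candidate for $H$ is a face-by-face exploration of $\hat G$: starting from a vertex $v_0$, at each leaf repeatedly attach, through the clique edges of a hitherto-unused face at that leaf, the vertices of that face not yet reached. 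Counting $n(k)$ then amounts to local bookkeeping — how many new vertices each freshly attached face contributes (roughly $P-2$ per face, with $Q-2$ fresh faces available at a typical vertex) and how often two faces attached to the same earlier layer force their new vertices to coincide — and I expect this to yield the two-term recurrence $n(k+1) = a\,n(k) - n(k-1)$, whose characteristic polynomial is $z^2 - a z + 1$ and whose growth rate is exactly $\beta$.

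The last step is where the real work lies. One must verify that the face-by-face exploration of $\hat G$ actually defines a well-behaved subgraph — that distant parts of the construction never accidentally reconnect in a way that corrupts the layer count — and that the resulting recurrence is precisely $n(k+1)=a\,n(k)-n(k-1)$ rather than a messier relation with the wrong growth rate. This requires the same kind of fine control over the local combinatorics of $\{P,Q\}$ that underlies the counting of $n(k)$ in Appendix~\ref{sec:counting}; as there, I anticipate splitting the argument on the parity of $P$, and the first few layers near $v_0$ must be handled by hand so that boundary effects do not affect the asymptotic branching ratio $\beta$.
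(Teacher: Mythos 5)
Your proposal follows essentially the same route as the paper: reduce via $\pusite = 1-\pcsite(\hat G)$, get the upper bound on $\pusite$ from the degree $Q(P-2)$ of the matching lattice and the trivial connective-constant bound, and get the lower bound from a BFS/spanning-tree branching ratio equal to the larger root of $z^2-az+1$. The ``real work'' you defer is exactly the face-layer recurrence $n(\ell)=a\,n(\ell-1)-n(\ell-2)$ already proved in Appendix~\ref{sec:counting} via $n(\ell)+n(\ell-1)=(P-2)F(\ell)$ and $F(\ell)+F(\ell-1)=(Q-2)n(\ell-1)$; it holds uniformly in $P$ (no parity split is needed there, unlike for the graph-distance recurrence), and the clique structure of each face makes the BFS layers of $\hat G$ coincide with these face layers.
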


\noindent 
Note that when $a$ is large (i.e., if $P$ or $Q$ is large) the lower bound in~\eqref{eq:pu-bounds} tends to $1-1/a-1/a^3$.

\begin{proof}
According to~\eqref{eq:site-duality}, an upper (lower) bound for $\pusite$ is equivalent to a lower (upper) bound for $\pmatch$, the site percolation threshold on the matching lattice. We will prove bounds for $\pmatch$ following the same ideas as in Theorem~\ref{thm:pc-bounds}.

An upper bound for $\pmatch$ is given by the reciprocal of the branching ratio of the BFS tree of the matching lattice. Since all the vertices of a face of the underlying hyperbolic lattice are adjacent in the matching lattice, and since each face has a constant number of vertices, this branching ratio is also given by
\[
    \lim_{\ell \to \infty} \frac{n(\ell+1)}{n(\ell)} \, ,
\]
where $n(\ell)$ denotes the number of vertices on the perimeter of the $\ell$th layer of faces around the origin. In Appendix~\ref{sec:counting} we use a linear recurrence to compute $n(\ell)$ analytically, and~\eqref{eq:binet} shows that the branching ratio is $(a+\sqrt{a^2-4})/2$.  This gives the lower bound on $\pusite$ in~\eqref{eq:pu-bounds}.

The upper bound in~\eqref{eq:pu-bounds} follows from the fact that the connective constant of a $d$-regular graph is at most $d-1$. The matching lattice is regular with $d=Q(P-2)$.  
\end{proof}

\begin{figure}
  \centering
  \includegraphics[width=\columnwidth]{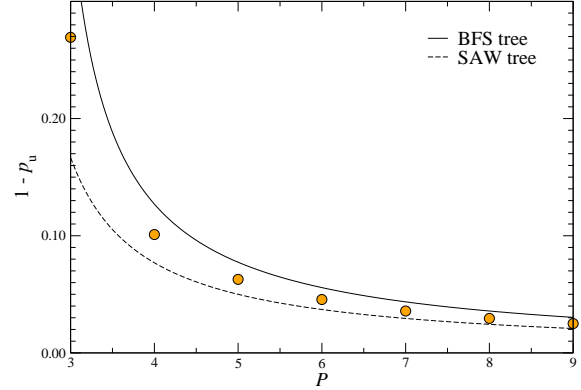}
  \vspace{-24pt}
  \caption{Uniqueness threshold for $Q=7$ and $P=3,\ldots,9$ and the bounds from Theorem~\ref{thm:pu-bounds}.}
  \label{fig:pu-bounds}
\end{figure}

Fig.~\ref{fig:pu-bounds} compares our measurements of the uniqueness threshold with the bounds of Theorem~\ref{thm:pu-bounds}.  We could easily tighten these bounds, for instance by bounding the connective constant on the matching lattice by prohibiting short loops as in Theorem~\ref{thm:pc-bounds}.

Finally, recall that $\pubond$ for $\{P,Q\}$ is simply $1-\pcbond$ on the dual lattice $\{Q,P\}$.  Thus we can read bounds on $\pubond$ directly from Theorem~\ref{thm:pc-bounds}. 

\section{Mean Distance}
\label{sec:mean-distance}

Another quantity of interest is the mean graph distance $\langle k_N\rangle$ of a vertex from the origin of the cluster of mass $N$. This quantity can be computed exactly on a tree, for the invasion percolation cluster as well as for the incipient infinite cluster of Bernoulli percolation~\cite{nickel:wilkinson:83}.  The exact expressions for finite $N$ involve hypergeometric functions, but here we care only about the asymptotic behavior. For Bernoulli percolation on a $Q$-regular tree, the asymptotic expression is
\begin{equation}
  \label{eq:mean-distance-tree-bernoulli}
  \langle k_N\rangle = \sqrt{\frac{\pi(Q-1)}{2(Q-2)}}\,N^{\frac{1}{2}}
  - \frac{4(Q-1)+1}{3(Q-2)} + o(1)\,.
\end{equation}
For invasion percolation on a tree, the leading term also grows as $N^{\frac{1}{2}}$, but the next term grows logarithmically:
\begin{equation}
    \label{eq:mean-distance-tree-invasion}
    \langle k_N\rangle = \frac{4}{3}
    \sqrt{\frac{\pi(Q-1)}{2(Q-2)}}\,N^{\frac{1}{2}} -\frac{1}{3}
    \frac{Q}{Q-2}\,\ln N + O(1)
\end{equation}
We claim, that on hyperbolic lattices, the mean distance of a vertex of the invasion cluster, has scaling similar to~\eqref{eq:mean-distance-tree-invasion}, i.e.,
\begin{equation}
    \label{eq:mean-distance}
    \langle k_N\rangle = c_1(P,Q)\,N^{\frac{1}{2}} - c_2(P,Q)\,\ln N + O(1)\,.
\end{equation}
This claim is clearly supported by our data, see Fig.~\ref{fig:kappa}.
The constants $c_1$ and $c_2$ differ from the corresponding values in
\eqref{eq:mean-distance-tree-invasion}, but they both converge to
these values as $P$ gets large.

\begin{figure}
  \centering
  \includegraphics[width=\columnwidth]{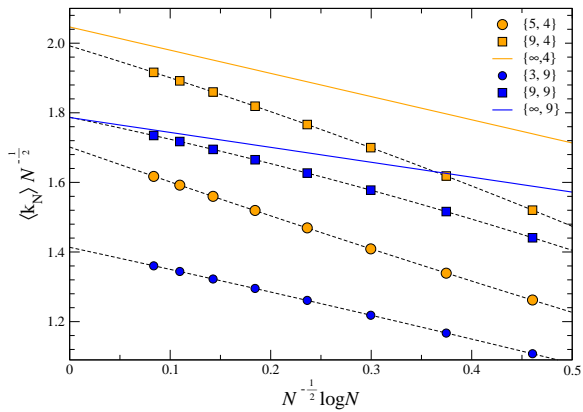}
  \vspace{-24pt}
  \caption{Mean distance $\langle k_N\rangle$ of a vertex of the invasion cluster of mass $N$ to the origin. The lines $\{\infty,Q\}$ represent the asymptotics~\eqref{eq:mean-distance-tree-invasion} for the $Q$-regular tree. The data shown is for bond percolation, and the dotted lines are quadratic fits.}
  \label{fig:kappa}
\end{figure}

We also measured the maximum distance of a vertex on the
invasion cluster from the origin. This maximum distance also scales
like \eqref{eq:mean-distance}, with different values for $c_1$ and
$c_2$, of course.  

\section{Critical Exponents}
\label{sec:exponents}

It has been rigorously established that the critical exponents of percolation in hyperbolic lattices are equal to their mean-field values, i.e., their values on the Bethe lattice. This first proof by Schonmann~\cite{schonmann:02} was based on planarity and non-amenability, but later Madras and Wu~\cite{madras:wu:10} showed that non-amenability suffices. Thus any negative curvature puts the lattice in the universality class of the tree. The same holds for the critical exponents of the Ising model on hyperbolic lattices~\cite{shima:sakaniwa:06,shima:sakaniwa:06a,krcmar:2008,gendiar:2012,serina:2016}.

Since the critical exponents for percolation on hyperbolic lattices are known with mathematical rigor, there is no need to measure them numerically.  However, to check that our numerics are consistent with these rigorous results, we compute the Fisher exponent $\tau$.  

According to scaling theory, the average number of clusters of size
$s$ per lattice site scales as
\begin{equation}
  \label{eq:ns-scaling}
  n_s(p) = s^{-\tau}\,[f_0(z) + s^{-\Omega} f_1(z) + \cdots] \, ,
\end{equation}
where the scaling functions $f_0$ and $f_1$ are analytic for small
values of $z$, and where the scaling variable $z$ is defined as
\begin{equation}
  \label{eq:z-scaling}
  z = (p-p_c)s^\sigma \, .
\end{equation}
The exponent $\tau$ can be measured by growing single clusters at
$p=p_c$ and recording their sizes; this is known as the Leath
algorithm~\cite{leath:76a,leath:76b}.  This is similar to invasion
percolation except that it adds every neighboring vertex with weight $p \le p_c$ to the cluster instead of just the vertex with minimum weight.

\begin{figure}
  \centering
  \includegraphics[width=\columnwidth]{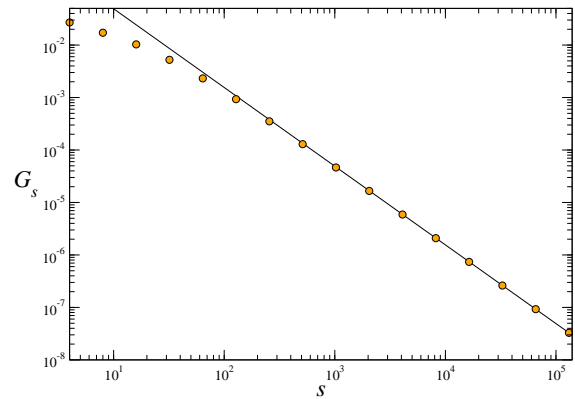}
  \vspace{-24pt}
  \caption{Scaling of the cluster density at criticality for the $\{7,3\}$ lattice using logarithmic binning. The line is
    $G_s \sim s^{1-\tau}$ for $\tau=5/2$, confirming that $\tau$ equals its mean-field value.}
  \label{fig:tau-7-3}
\end{figure}

In order to improve the count statistics on the larger (and rarer) clusters, the individual $n_s$ are binned on a logarithmic scale~\cite{rapaport:86}.  Thus the actual quantity we analyze is, where $s = 2^k$ for integer $k$, 
\begin{equation}
  \label{eq:def-Gs}
  G_s = \sum_{s'=s}^{2s-1} n_{s'}(p_c) \, .
\end{equation}
This scales as $\sim s^{1-\tau}$ for large values of $s$.  We show this for site percolation on the $\{7,3\}$ lattice in
Fig.~\ref{fig:tau-7-3}.  Our numerical results show that $\tau$ does indeed equal its mean-field value $\tau=5/2$, and results for other $P,Q$ are similar.

\section{Conclusions}

We have implemented the invasion percolation algorithm on the hyperbolic lattice, using a new combinatorial labeling of the vertices to make this algorithm computationally feasible.  This yields highly-accurate measurements of the thresholds for site and bond percolation, as well as (by using the matching lattice in the case of site percolation) of the threshold at which the infinite cluster becomes unique.  These measurements, in turn, allow us to compare these thresholds with rigorous upper and lower bounds, and to confirm experimentally that the critical exponents of percolation in negatively curved spaces are equal to their mean-field values.

\appendix

\section{Counting vertices in hyperbolic lattices}
\label{sec:counting}

The number of vertices a given distance from a specified site (the ``origin'') is given by a linear recurrence. This recurrence is simple if we measure the distance by the number $\ell$ of layers of faces that separate the vertices from the origin.  We will show that the number of vertices on the perimeter of the $\ell$th layer is given by
\begin{equation}
  \label{eq:n-ell-recursion}
  \begin{aligned}
    n(0) &= 0 \\
    n(1) &= (P-2)Q \\
  n(\ell) &= a n(\ell-1) - n(\ell-2) \qquad \ell > 1 \, ,
  \end{aligned}
\end{equation}
with
\begin{equation}
\label{eq:a}
a = (P-2)(Q-2)-2 \, .
\end{equation}
Like any homogeneous linear recurrence with constant coefficients, \eqref{eq:n-ell-recursion} can be
solved to give
\begin{equation}
  \label{eq:binet}
  n(\ell) =
  \frac{(P-2)Q}{\sqrt{a^2-4}}\left[\left(\frac{a+\sqrt{a^2-4}}{2}\right)^\ell
    - \left(\frac{a-\sqrt{a^2-4}}{2}\right)^\ell\right] \, .
\end{equation}
For hyperbolic lattices we have $a > 2$, and $n(\ell)$ grows exponentially with branching ratio $(a+\sqrt{a^2-4})/2$.  For Euclidean lattices, $a = 2$ and~\eqref{eq:binet} reduces to linear growth $n(\ell)=(P-2)Q\ell$.

To prove~\eqref{eq:n-ell-recursion} we start with the base case of the recurrence.  The first layer of faces consists of $Q$ polygons with $P$ vertices each. In the total count $PQ$ of vertices, the origin is counted $Q$ times, and in the perimeter, the $Q$ vertices connected to the origin are shared by adjacent polygons and counted twice. Hence $n(1)=(P-2)Q$.

Now let $F(\ell)$ denote the number of faces in layer $\ell$.  Each face has $P$ edges, two of them crossing from the inner boundary of this layer to its outer boundary. 
The remaining $(P-2) F(\ell)$ edges belong to the inner or outer boundary of the layer. Since each of these boundaries forms a cycle, the number of vertices on the boundary equals the number of edges. Hence we have
\begin{equation}
  \label{eq:proof-planarity}
  n(\ell)+n(\ell-1) = (P-2) F(\ell) \, .
\end{equation}
On the other hand, each vertex has $Q-2$ edges that point either inward or outward across a layer, and each of these edges corresponds uniquely to one polygon in layer $\ell$ or $\ell-1$. Hence we also have
\begin{equation}
  \label{eq:proof-dual}
  F(\ell) + F(\ell-1) = (Q-2) n(\ell-1) \, .
\end{equation}
Note the duality of~\eqref{eq:proof-planarity} and~\eqref{eq:proof-dual}.
Now adding~\eqref{eq:proof-planarity} to itself with
$\ell\mapsto\ell-1$ gives us
\begin{align*}
  n(\ell)+2n(\ell-1)+n(\ell-2) &= (P-2)\,\big(F(\ell)+F(\ell-1)\big)\\
  &= (P-2)(Q-2) n(\ell-1) \, ,
  \end{align*}
yielding~\eqref{eq:n-ell-recursion} and~\eqref{eq:a}.  Note that because of the linear relation between $F(\ell)$ and $n(\ell)$, the number of faces $F(\ell)$ obeys the the same recurrence but with base case $F(0)=0$ and $F(1)=q$.

Another measure of distance is given by the graph distance, i.e. by the length of the shortest path that connects two vertices.  Let $n_k$ denote the number of vertices in the hyperbolic lattice with graph distance $k$ from the origin.  This number is again given by a linear recurrence, albeit a more complicated one, which was derived independently in physics~\cite{okeeffe:98} and in mathematics~\cite{paul:pippenger:11}.  We review this here, using the notation of~\cite{okeeffe:98}.

The recurrence depends on whether $P$ is even or odd, and in the even case on $P \bmod 4$.  For $P=2m$ where $m$ is even, it reads
\begin{equation}
  \label{eq:n-p-even-m-even}
  n_{k+1} = (Q-2) \sum_{i=0}^{m-2} n_{k-i} - n_{k-m+1} \, , 
\end{equation}
while if $m$ is odd we have
\begin{equation}
  \label{eq:n-p-even-m-odd}
  n_{k+1} = \sum_{i=0}^{(m-3)/2} \Big[(Q-1)n_{k-2i} - n_{k-2i-1}\Big] \, . 
\end{equation}
The initial values are
\begin{equation}
  \label{eq:n-p-even-initials}
  n_k = \begin{cases}
     0 & \quad k \le 0 \\
   (Q-1)^{k-1} Q & \quad 0 < k < m \, .
  \end{cases}
\end{equation}
For odd values $P=2m+1$ and $m > 1$ the recurrence is
\begin{equation}
  \label{eq:n-p-odd}
  n_{k+1} = (Q-2) \sum_{i=0}^{m-2} (n_{k-i}+n_{k-m-i})  + (Q-4) n_{k-m+1} -n_{k-2m+1}
\end{equation}
with initial values
\begin{equation}
  \label{eq:n-p-odd-initials}
  n_k = (Q-1)^{k-1} Q \qquad 0 < k \le m \, .
\end{equation}
For $P=3$, the sum in~\eqref{eq:n-p-odd} disappears, and the recurrence becomes
\begin{equation}
  \label{eq:n-p-3}
  n_{k+1} = (Q-4) n_k - n_{k-1}
\end{equation}
with initial values
\begin{equation}
  \label{eq:n-p-3-initials}
  n_{k} = \begin{cases}
   0 & \quad k \le 0 \\
   Q & \quad k = 1 \, .
 \end{cases}
\end{equation}
The corresponding characteristic polynomials are as follows.  
For $P=2m$ and $m$ even (i.e., $P \bmod 4=0$), 
\begin{subequations}
   \label{eq:pc-upper-char-poly}
   \begin{equation}
    \label{eq:pc-upper-char-poly-P-even-m-even}
    R_{P,Q}(z) = z^m \left( 1 - \frac{Q-2}{z-1} \right) + \frac{Q-2}{z-1} + Q-1 \, .
   \end{equation}
For $P=2m$ and $m$ odd (i.e., $P \bmod 4=2$), 
   \begin{equation}
    \label{eq:pc-upper-char-poly-P-even-m-odd}
    R_{P,Q}(z) = 
    z^m \left( \frac{z-(Q-1)}{z^2-1} \right) + \frac{z(Q-1)-1}{z^2-1} \, . 
   \end{equation}
For $P=2m+1$ (i.e., $P$ is odd),
  \begin{equation}
    \label{eq:pc-upper-char-poly-P-odd}
    R_{P,Q}(z) = 
      z^{2m} \left( 1-\frac{Q-2}{z-1} \right) + 2z^m+\frac{z(Q-1)-1}{z-1}
   \end{equation}
In particular, for $P=3$ ($m=1$) we have 
    \begin{equation}
    \label{eq:pc-upper-char-poly-P-3}
    R_{3,Q}(z) = z^2 - (Q-4) z + 1 \, ,
   \end{equation}
and for $P=4$, setting $m=2$ in~\eqref{eq:pc-upper-char-poly-P-even-m-even} gives
\begin{equation}
    \label{eq:pc-upper-char-poly-P-4}
    R_{4,Q}(z) = z^2 - (Q-2) z + 1 \, .
   \end{equation}
  \end{subequations}

\begin{figure}
  \centering
  \includegraphics[width=\columnwidth]{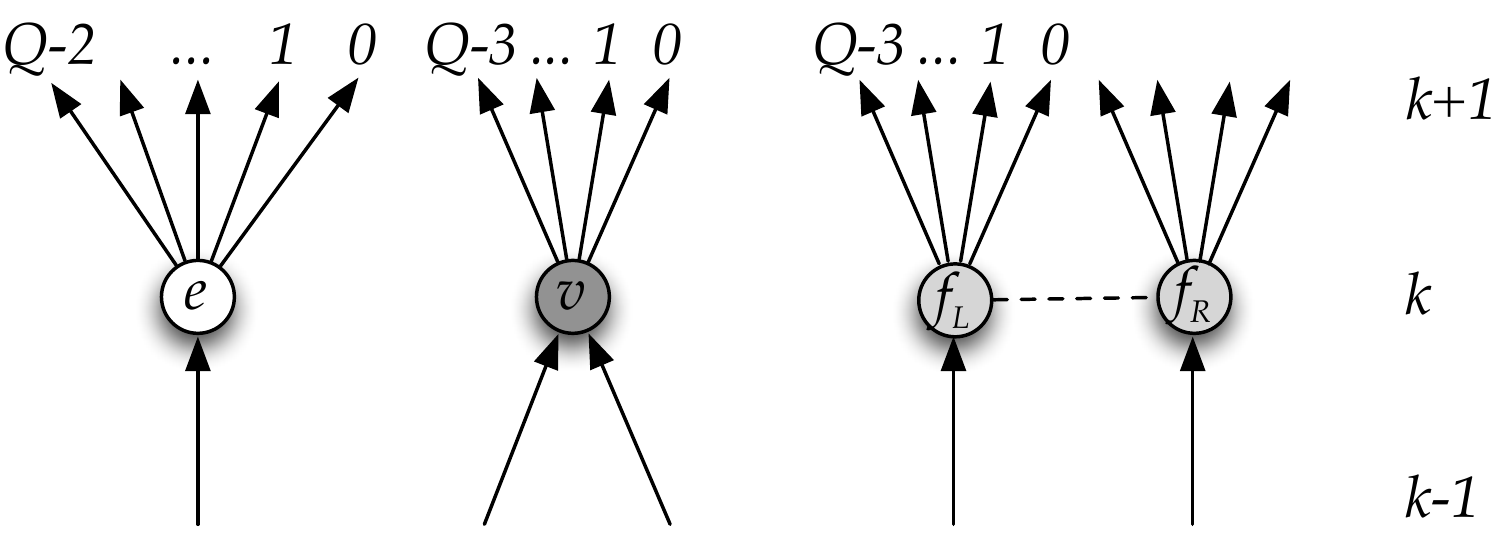}
  \caption{Types of vertices in layer $k$ and their edge labelings.}
  \label{fig:vertices}
\end{figure}

To prove~\eqref{eq:n-p-even-m-even}, \eqref{eq:n-p-even-m-odd}, \eqref{eq:n-p-odd} and~\eqref{eq:n-p-3}, we consider all vertices in the layer at distance $k$. We assign directions to the edges that connect vertices in layer $k$ to vertices in layer $k+1$ (see Fig.~\ref{fig:vertices}). 

For even values of $P$ there are two types of vertices: ``$e$-vertices'' have $1$ incoming edge and $Q-1$ outgoing edges, and ``$v$-vertices'' have $2$ incoming edges and $Q-2$ outgoing edges. For odd values of $P$ there are also pairs of ``$f$-vertices'' in the same layer connected to each other by an undirected edge, and each one has one incoming edge and $Q-2$ outgoing edges. 

Let $x_k$ denote the number of vertices of type $x \in \{e,v,f\}$.  For $P=2m$ we know that each $v$-vertex in layer $k$ ``closes'' a polygon that ``opened'' in layer $k-m$. Each $e$-vertex opens $Q-2$ polygons in the higher layers, and each $v$-vertex opens $Q-3$ polygons. Hence
\begin{equation}
  \label{eq:v-rec-p-even}
  v_{k} = (Q-2)e_{k-m} + (Q-3) v_{k-m} \, .
\end{equation}
The number of edges that connect layer $k-1$ and $k$ is $e_k+2v_k$,
but it is also given by $
(Q-1)e_{k-1} + (Q-2) v_{k-1}$. 
Hence
\begin{equation}
  \label{eq:e-rec-p-even}
  e_{k} = (Q-1)e_{k-1} + (Q-2) v_{k-1}-2v_k \, .
\end{equation}
Finally, the total number of vertices is 
\begin{equation}
  \label{eq:n-rec-p-even}
  n_k = v_k + e_k \, .
\end{equation}
This gives us three equations for $n_k$, $v_k$ and $e_k$, and eliminating $e_k$ and $v_k$ yields~\eqref{eq:n-p-even-m-even} and~\eqref{eq:n-p-even-m-odd}.

For the odd case $P=2m+1$, we again consider the number of edges that connect layer $k-1$ and $k$ to obtain
\begin{equation}
  \label{eq:e-rec-p-odd}
  e_{k} = (Q-1)e_{k-1} + (Q-2) v_{k-1}+(Q-2) f_{k-1}-2v_k \, .
\end{equation}
Each polygon that is closed by a single $v$-vertex in layer $k$ is opened by two $f$-vertices in layer $k-m$, so
\begin{equation}
  \label{eq:v-rec-p-odd}
  v_{k} = \frac{1}{2}\,f_{k-m} \, .
\end{equation}
The number of polygons closed by two $f$-vertices in layer $k$ equals the number of polygons opened by single vertices in layer $k-m$, so
\begin{equation}
  \label{eq:f-rec-p-odd}
  \frac{1}{2}\,f_{k} = (Q-2) e_{k-m} + (Q-3) v_{k-m} + (Q-3) f_{k-m} \, .
\end{equation}
In this case the total number of vertices is
\begin{equation}
  \label{eq:n-rec-p-odd}
  n_k = v_k + e_k+f_k \, ,
\end{equation}
and eliminating $v_k$, $e_k$, and $f_k$ yields~\eqref{eq:n-p-odd}.

\begin{figure}
  \centering
  \includegraphics[width=\columnwidth]{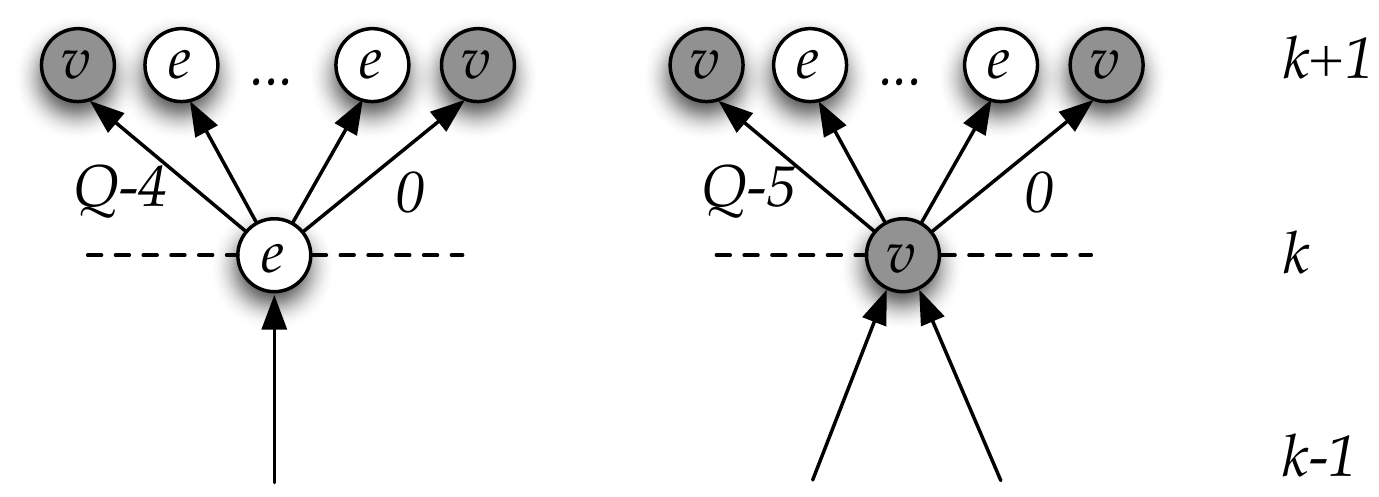}
  \caption{Types of vertices and their edge labelings for $P=3$.}
  \label{fig:trivertices}
\end{figure}

Although we can view it as a degenerate case of~\eqref{eq:n-p-odd}, we end with a specialized derivation of~\eqref{eq:n-p-3}, the recurrence for $P=3$.  In triangular lattices, there are only two types of vertices: $e$-vertices with one incoming edge, and $v$-vertices with two incoming edges.  As shown in Fig.~\ref{fig:trivertices}, every vertex is connected to two vertices in the same layer, leaving a total of $Q-2$ edges to connect to other layers.  Moreover, each vertex in layer $k$ is connected to two $v$-vertices in layer $k+1$. But since each $v$-vertex has two incoming edges,
\[
  v_{k+1} = n_k \, .
\]
Counting the number of outgoing edges to $e$-vertices gives
\[
  e_{k+1} = (Q-5) e_k + (Q-6) v_k \, ,
\]
and combining these equations yields~\eqref{eq:n-p-3}.

\section{Implementing hyperbolic lattices}
\label{sec:implementation}

As discussed above, the challenge for implementing invasion percolation, and indeed any algorithm on hyperbolic lattices, is the lack of a simple coordinate system.  We need a computationally efficient way to index vertices, and to compute the indices of their neighbors.  
Here we describe a ``coordinate system'' that assigns a unique string to each vertex, and we give a procedure for computing the strings in its neighborhood. 
 
The idea is to label each vertex according to one of the shortest paths that connects it to the origin of the lattice.  In essence, we do this by re-doing the derivations of the linear recurrences in Appendix~\ref{sec:counting}, while keeping track of the labels of individual edges in the path. Thus the origin is represented by the empty string, and each vertex in layer $k$ corresponds to a string of length $k$. 

The labeling of the edges is depicted in Figs.~\ref{fig:vertices} and~\ref{fig:trivertices}. Note that we only label outgoing edges, i.e., edges that run between layers $k$ and $k+1$. Edges that connect vertices in the same layer are never part of a shortest path, so there is no need to assign labels to these edges.

The subset of directed (and therefore labeled) edges induces a subgraph that is almost a directed tree. Only the $v$-vertices with their two incoming edges cause loops by closing a face. To break the resulting ties, we never use the right incoming edge of a $v$-vertex.  With this rule, the subgraph induced by the allowed directed edges is a tree, and we denote the unique path from the root to a vertex $u$ in level $k$ as $(u_1, u_2, \ldots, u_k)$.

Suppose $u$ is in layer $k$.  Deleting the last edge in the path to $u$ yields $u$'s ``parent'' in layer $k-1$.  If we extend the path to $u$ by one more edge yields a ``child'' $v$ in layer $k+1$; however, this path might violate the above rule, so the path to $v$ in the tree might not go through $u$. In addition, $u$ may have neighbors in its own layer, which are neither its parents nor its children. The following procedures are useful to perform all this navigation:
\begin{itemize}
\item $\vertextype(u)$ returns the type ($e$, $v$, $f_R$
  or $f_L$) of vertex $u$,
\item $\outdegree(u)$ returns the number of $u$'s outgoing edges,
\item $\child(u,x)$ returns the child of $u$ along an outgoing edge with label $x$,
\item $\parent(u)$ returns the parent of $u$,
\item $\suc(u)$ returns the next vertex in $u$'s layer moving counterclockwise, and
\item $\pred(u)$ returns the next vertex in $u$'s layer moving clockwise.
\end{itemize}
Note that $\suc(u)$ and $\pred(u)$ may or may not be neighbors of $u$.

\begin{figure}

\includegraphics[width=0.6\columnwidth]{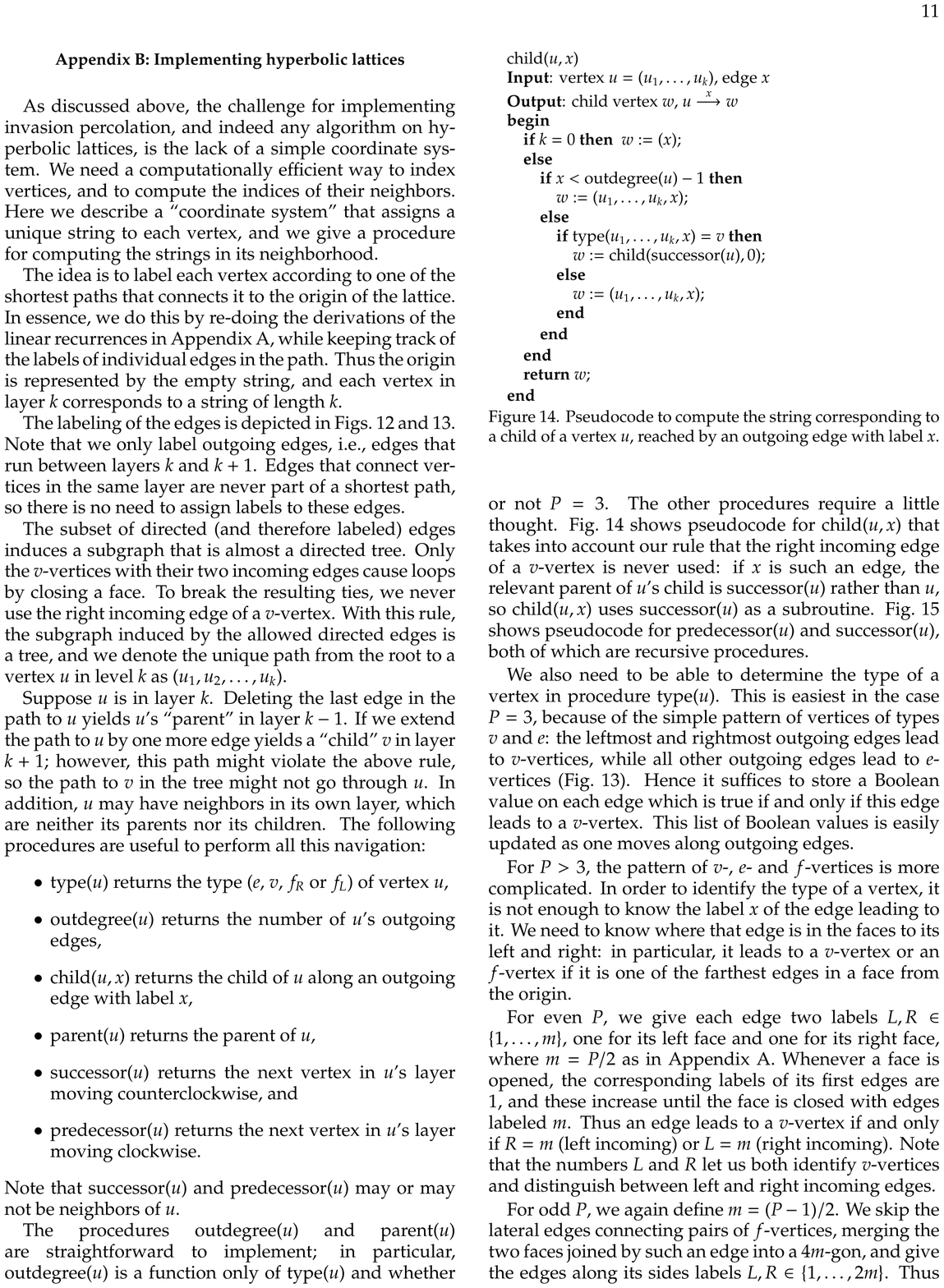}
\vspace{-8pt}
\caption{Pseudocode to compute the string corresponding to a child of a vertex $u$, reached by an outgoing edge with label $x$.}
\label{fig:child}
\end{figure}

The procedures $\outdegree(u)$ and $\parent(u)$ are straightforward to implement; in particular, $\outdegree(u)$ is a function only of $\vertextype(u)$ and whether or not $P=3$. The other procedures require a little thought. Fig.~\ref{fig:child} shows pseudocode for $\child(u,x)$ that takes into account our rule that the right incoming edge of a $v$-vertex is never used: if $x$ is such an edge, the relevant parent of $u$'s child is $\suc(u)$ rather than $u$, so $\child(u,x)$ uses $\suc(u)$ as a subroutine. Fig.~\ref{fig:sucpred} shows pseudocode for $\pred(u)$ and $\suc(u)$, both of which are recursive procedures.

\begin{figure}
 


 

\includegraphics[width=0.83\columnwidth]{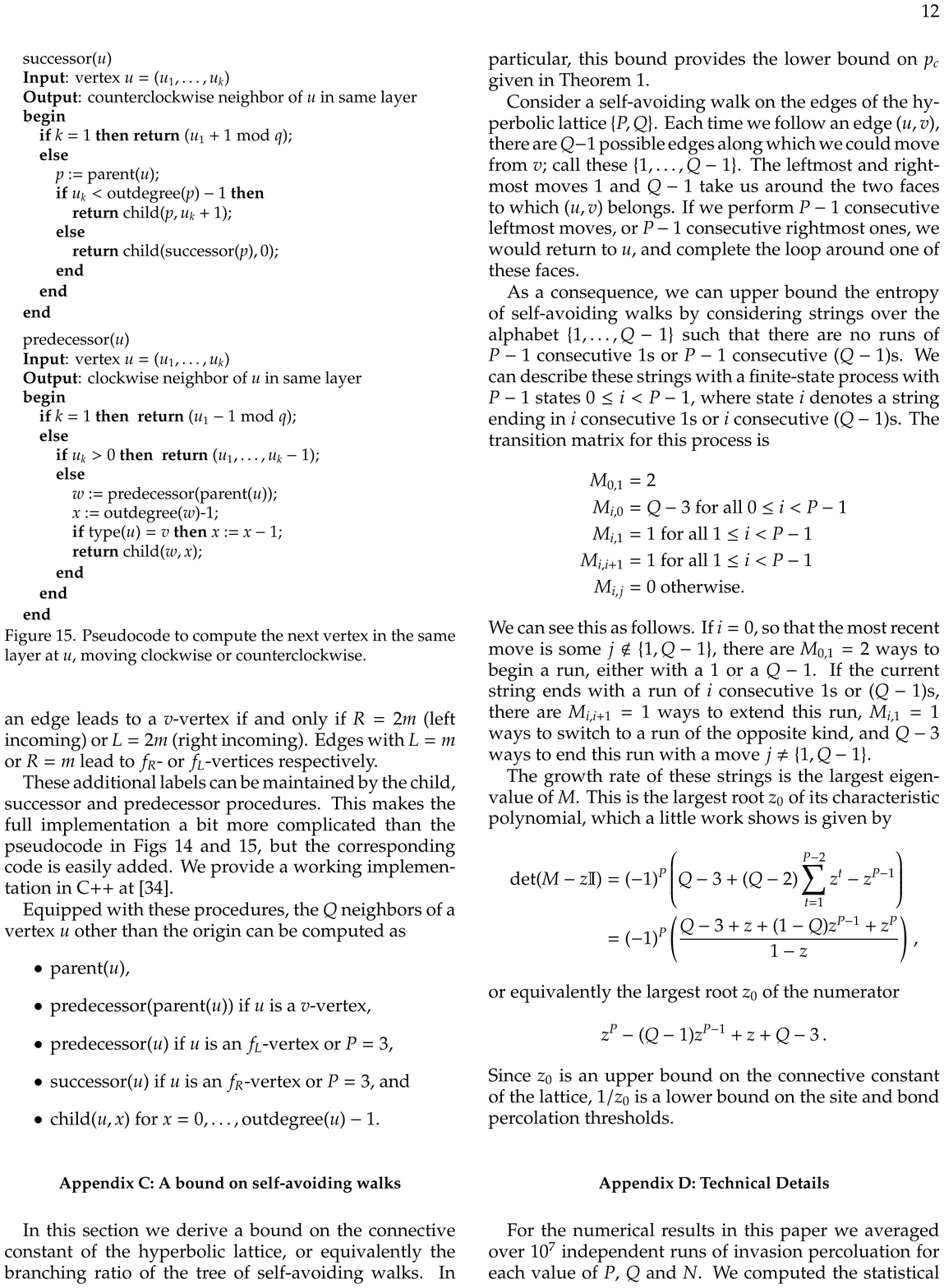}
 \vspace{-8pt}
\caption{Pseudocode to compute the next vertex in the same layer at $u$, moving clockwise or counterclockwise.}
\label{fig:sucpred}
\end{figure}

We also need to be able to determine the type of a vertex in procedure $\vertextype(u)$. This is easiest in the case $P=3$, because of the simple pattern of vertices of types $v$ and $e$: the leftmost and rightmost outgoing edges lead to $v$-vertices, while all other outgoing edges lead to $e$-vertices (Fig.~\ref{fig:trivertices}).  Hence it suffices to store a Boolean value on each edge which is true if and only if this edge leads to a $v$-vertex. This list of Boolean values is easily updated as one moves along outgoing edges.

For $P > 3$, the pattern of $v$-, $e$- and $f$-vertices is more complicated.  The type of a vertex is not determined by the label $x$ of the edge leading to it: we need to know where that edge is in the faces to its left and right. In particular, it leads to a $v$-vertex or an $f$-vertex if it is one of the farthest edges in a face from the origin.  

For even $P$, we give each edge two labels $L, R \in \{1,\ldots,m\}$, one for its left face and one for its right face, where $m = P/2$ as in Appendix~\ref{sec:counting}.  Whenever a face is opened, the corresponding labels of its first edges are $1$, and these increase until the face is closed with edges labeled $m$.  Thus an edge leads to a $v$-vertex if and only if $R=m$ (left incoming) or $L=m$ (right incoming).  Note that the numbers $L$ and $R$ let us both identify $v$-vertices and distinguish between left and right incoming edges.  

For odd $P$, we again define $m= (P-1)/2$.  We skip the lateral edges connecting pairs of $f$-vertices, merging the two faces joined by such an edge into a $4m$-gon, and give the edges along its sides labels $L, R \in \{1,\ldots,2m\}$.  Thus an edge leads to a $v$-vertex if and only if $R=2m$ (left incoming) or $L=2m$ (right incoming).  Edges with $L=m$ or $R=m$ lead to $f_R$- or $f_L$-vertices respectively.

These additional labels can be maintained by the $\child$, $\suc$ and $\pred$ procedures.  This makes the full implementation a bit more complicated than the pseudocode in Figs~\ref{fig:child} and~\ref{fig:sucpred}, but the corresponding code is easily added.  We provide a working implementation in C++ at~\cite{mertens:percolation}.

Equipped with these procedures, the $Q$ neighbors of a vertex $u$ other than the origin can be computed as
\begin{itemize}
\item $\parent(u)$,
\item $\pred(\parent(u))$ if $u$ is a $v$-vertex,
\item $\pred(u)$ if $u$ is an $f_L$-vertex or $P=3$,
\item $\suc(u)$ if $u$ is an $f_R$-vertex or $P=3$, and
\item $\child(u,x)$ for $x=0,\ldots,\outdegree(u)-1$.
\end{itemize}

\section{A bound on self-avoiding walks}
\label{sec:saw}

In this section we derive a bound on the connective constant of the hyperbolic lattice, or equivalently 
the branching ratio of the tree of self-avoiding walks.  In particular, this bound provides the lower bound 
on $p_c$ given in Theorem~\ref{thm:pc-bounds}.

Consider a self-avoiding walk on the edges of the hyperbolic lattice $\{P,Q\}$.  
Each time we follow an edge $(u,v)$, there are $Q-1$ 
possible edges along which we could move from $v$; call these $\{1,\ldots,Q-1\}$.  
The leftmost and rightmost moves $1$ and $Q-1$ take us around the two faces to which $(u,v)$ belongs.  
If we perform $P-1$ consecutive leftmost moves, or $P-1$ consecutive rightmost ones, we would return to $u$, 
and complete the loop around one of these faces.

As a consequence, we can upper bound the entropy of self-avoiding walks by considering strings 
over the alphabet $\{1,\ldots,Q-1\}$ such that there are no runs of $P-1$ consecutive $1$s or $P-1$ 
consecutive $(Q-1)$s.  We can describe these strings with a finite-state process with $P-1$ states 
$0 \le i < P-1$, where state $i$ denotes a string ending in $i$ consecutive $1$s or $i$ consecutive $(Q-1)$s.  
The transition matrix for this process is 
\begin{align*}
M_{0,1} &= 2 \\
M_{i,0} &= Q-3 \text{ for all $0 \le i < P-1$} \\
M_{i,1} &= 1 \text{ for all $1 \le i < P-1$} \\
M_{i,i+1} &= 1 \text{ for all $1 \le i < P-1$} \\
M_{i,j} &= 0 \text{ otherwise.}
\end{align*}
We can see this as follows.  If $i=0$, so that the most recent move is some $j \notin \{1,Q-1\}$, 
there are $M_{0,1}=2$ ways to begin a run, either with a $1$ or a $Q-1$.  If the current string ends 
with a run of $i$ consecutive $1$s or $(Q-1)$s, there are $M_{i,i+1}=1$ ways to extend this run, 
$M_{i,1} = 1$ ways to switch to a run of the opposite kind, and $Q-3$ ways to end this run with a move 
$j \ne \{1,Q-1\}$.

The growth rate of these strings is the largest eigenvalue of $M$.  This is the largest root $z_0$ of its 
characteristic polynomial, which a little work shows is given by 
\begin{align*}
\det (M-z \id) 
&= (-1)^P \left( Q-3 + (Q-2) \sum_{t=1}^{P-2} z^t - z^{P-1} \right) \\
&= (-1)^P \left( \frac{Q-3 + z + (1-Q) z^{P-1} + z^P}{1-z} \right) \, ,
\end{align*}
or equivalently the largest root $z_0$ of the numerator
\[
z^P - (Q-1) z^{P-1} + z + Q-3 \, . 
\]
Since $z_0$ is an upper bound on the connective constant of the
lattice, $1/z_0$ is a lower bound on the site and bond percolation
thresholds.

\section{Technical Details}
\label{sec:technicalities}

\begin{figure}
  \centering
  \includegraphics[width=\columnwidth]{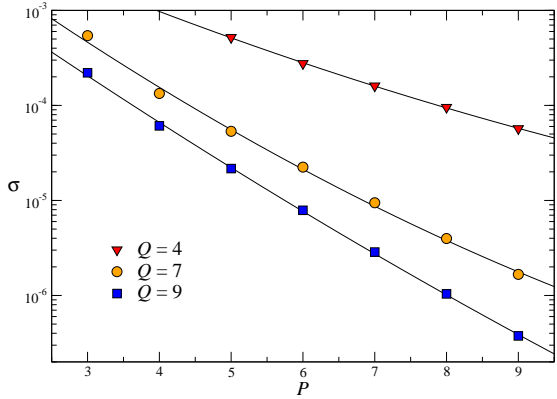}
  \vspace{-24pt}
  \caption{Decay of the standard deviation $\sigma$ of $N/B(N)$ for
    $N=12800$ vs. $P$. }
  \label{fig:stddev-P}
\end{figure}

\begin{figure}
  \centering
  \includegraphics[width=\columnwidth]{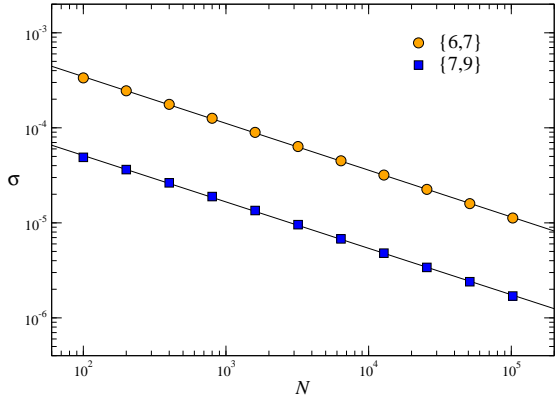}
  \vspace{-24pt}
  \caption{Decay of the standard deviation $\sigma$ of $N/B(N)$ vs. $N$. }
  \label{fig:stddev-N}
\end{figure}

For the numerical results in this paper we averaged over $10^7$
independent runs of invasion percoluation for each value of $P$, $Q$
and $N$. We computed the statistical error on $N/B(N)$ using jackknife
resampling~\cite{efron:gong:83}. The errors are of order $10^{-8}$ or
$10^{-9}$ for the larger clusters. These small values of the errors
are caused by the small values of the standard deviation $\sigma$ of
$N/B(N)$ for hyperbolic lattices. For given $N$ and $Q$, $\sigma$
decays exponentially with $P$ (Fig.~\ref{fig:stddev-P}). We know that
for a tree ($P=\infty$), $\sigma$ is zero. In addition,
$\sigma=O(1/\sqrt{N})$ (Fig.~\ref{fig:stddev-N}), which is not
surprising.  The cluster masses we used are $N=100 \cdot 2^k$ for
$k=0,1,\ldots,7$. For some systems we also simulated larger systems
with $k=8, 9, 10$ with $10^6$ samples each.  To grow an invasion
cluster of mass $N$, one needs $B(N) \simeq N/p_c$ pseudorandom
numbers. Hence each value in Table~\ref{tab:thresholds} is based upon
at least $\sim 10^{13}$ pseudorandom numbers, which were produced by
generators from the TRNG library~\cite{bauke:mertens:rng3}.  For
comparison, Americans eat roughly $10^{10}$ chickens per year
\cite{chicken:us}.

\begin{table}
  \centering
  \begin{tabular}{lccc}
  CPU & frequency & nodes$\times$cpus$\times$cores & memory/core \\[0.5ex]
  E5-1620 & 3.60 GHz & $1\times 2\times 4$ & 4.0 GByte \\
  E5-2630 & 2.30 GHz & $5\times 4\times 6$  & 5.3 GByte \\
  E5-2630v2 & 2.60 GHz &  $5\times 4\times 6$ & 5.3 GByte \\
  E5-2640v4 & 2.40 Ghz & $3\times 4 \times 10$ & 6.4 GByte
  \end{tabular}
  \caption{Computing machinery used for the simulations in this
    paper. All CPUs are Intel\textsuperscript{\textregistered} Xeon\textsuperscript{\textregistered}.}
  \label{tab:leonardo}
\end{table}

We ran our simulations on a cluster with a mixture of CPUs and a total
of 368 cores, see Table~\ref{tab:leonardo}. Each value of
Table~\ref{tab:thresholds} took roughly 24 hours wall-clock time on
this cluster. The actual invasion percolation cluster algorithm has
time complexity $O(N\log N)$, but because our labeling scheme induces
costs $O(N^{1/2})$ for handling the typical vertex in a cluster, the
total time is $O(N^{3/2} \log N)$. Memory per core can become an issue in the computation of $\pusite$ because the percolation thresholds $p_c(\hat{G})=1-\pusite(G)$ on the matching lattices are small and we need to store $B(N)\simeq N/p_c(\hat{G})$ vertices. This is the main reason why we did not go beyond $N=12800$ in the simulations for $\pusite$.

\acknowledgements{
We enjoyed fruitful discussion with Bob Ziff and we appreciate
valuable comments from Brian Hayes.  S.M. thanks the Santa
Fe Institute for their hospitality.  C.M. was supported in part by the
John Templeton Foundation and thanks Otto-von-Guericke University for their hospitality.
}

\bibliography{percolation,statmech,mertens,cs}

\end{document}